\definecolor{customcolor}{HTML}{0072bd}
\newtheorem{proposition}{Proposition}
\begin{document}
\bstctlcite{IEEEexample:BSTcontrol}

\title{High-Resolution Uplink Sensing in Millimeter-Wave ISAC Systems}
\author{Liangbin~Zhao,~\IEEEmembership{Student~Member,~IEEE,}
         Zhitong~Ni,~\IEEEmembership{Member,~IEEE,}
         Yimeng~Feng,~\IEEEmembership{Member,~IEEE,}
         Jianguo~Li,~\IEEEmembership{Member,~IEEE,}
         Xiangyuan~Bu,~\IEEEmembership{Member,~IEEE,}
         and~J.~Andrew~Zhang,~\IEEEmembership{Senior~Member,~IEEE}
\thanks{L. Zhao, J. Li, and X. Bu are with Beijing Institute of Technology, Beijing, 100081, China (emails: lb\_zhao\_bit\_ee@163.com, jianguoli@bit.edu.cn, bxy@bit.edu.cn).} 
 \thanks{Z. Ni and J. A. Zhang are with the University of Technology Sydney, NSW, 2007, Australia (emails: zhitong.ni@uts.edu.au, andrew.zhang@uts.edu.au).}
 \thanks{Y. Feng is with Macquarie University, NSW, 2109, Australia (email: yimeng.feng@mq.edu.au).}
}

\maketitle

\begin{abstract}
Perceptive mobile networks (PMNs), integrating ubiquitous sensing capabilities into mobile networks, represent an important application of integrated sensing and communication (ISAC) in 6G.
In this paper, we propose a practical framework for uplink sensing of angle-of-arrival (AoA), Doppler, and delay in millimeter-wave (mmWave) communication systems, which addresses challenges posed by clock asynchrony and hybrid arrays, while being compatible with existing communication protocols.
We first introduce a beam scanning method and a corresponding AoA estimation algorithm, which utilizes frequency smoothing to effectively estimate AoAs for both static and dynamic paths.
We then propose several methods for constructing a ``clean'' reference signal, which is subsequently used to cancel the effect caused by the clock asynchrony.
We further develop a signal ratio-based joint AoA-Doppler-delay estimator and propose an AoA-based 2D-FFT-MUSIC (AB2FM) algorithm that applies 2D-FFT operations on the signal subspace, which accelerates the computation process with low complexity.
Our proposed framework can estimate parameters in pairs, removing the complicated parameter association process. 
Simulation results validate the effectiveness of our proposed framework and demonstrate its robustness in both low and high signal-to-noise ratio (SNR) conditions.

\end{abstract}
 
\begin{IEEEkeywords}
Integrated sensing and communications, perceptive mobile networks, hybrid arrays, millimeter wave, uplink sensing.
\end{IEEEkeywords}

\IEEEpeerreviewmaketitle

\section{Introduction}
\label{sec:introduction}
\subsection{Background and Motivations}
\label{subsec:background_and_motivations}
Integrated sensing and communication (ISAC) has emerged as an innovative technology that unifies sensing and communication functions within a single system, sharing hardware, spectrum resources, etc \cite{zhang2021overview}. 
This integration has the potential to greatly enhance efficiency and performance, attracting significant attention from academia and industry in recent years \cite{liu2020joint, zhang2022integration}. 
In the forthcoming 6G era \cite{ITUrecommendation2023framework}, millimeter-wave (mmWave) and terahertz (THz) frequency bands are promising candidates for ISAC systems.
Their shorter wavelengths enable smaller antennas, allowing for denser antenna arrays and significantly enhancing the potential for higher resolution in angle of arrival (AoA) estimation. 
The higher data rates and wider bandwidths also offer potential improvement in Doppler and delay estimation. 

Perceptive mobile networks (PMN), first proposed in 2017 \cite{zhang2017PMNframework}, enhance traditional communication-only mobile networks with ubiquitous sensing capabilities while maintaining uncompromising communication services. 
PMN enables three types of communication-centric sensing at base stations: uplink sensing, downlink active sensing, and downlink passive sensing. 
Among these options, uplink sensing -- where user equipment (UE) transmits signals that reflect or refract off environmental targets and are then received and processed by the base station (BS) for sensing \cite{rahman2019PMN} -- shows significant promise. 
This approach avoids the necessity for full-duplex operation and requires only minimal modifications to existing hardware and system architectures, making it a practical option for PMN deployment \cite{zhang2020PMN}. 

\subsection{Related Works}
\label{subsec:related_works}
\added{Achieving high-resolution and accurate estimations using uplink signals, especially in the mmWave or higher frequency bands, remains a challenging task.}
\added{Large-scale antenna array with hybrid architecture is commonly employed for mmWave communications \cite{sohrabi2016hybrid}.}
\added{In these hybrid architectures, the antenna array is divided into subarrays, each equipped with its analog-to-digital converter (ADC) or digital-to-analog converter (DAC).}
Compared to fully digital arrays, this architecture significantly reduces the number of ADCs/DACs, which introduces challenges in achieving high-resolution AoA estimation. 
Early works on AoA estimation in hybrid arrays, such as \cite{huang2010hybrid} and \cite{huang2011frequency}, primarily focused on estimating a single path. 
\added{Later, several beam scanning methods were proposed to reduce complexity, including sequential scanning \cite{singh2015feasibility} and hierarchical multi-resolution scanning \cite{noh2017multiresolutionCodebook}, albeit with low resolution.}
To improve resolution and support multipath estimation, algorithms based on compressive sensing \cite{lee2014exploitingCS} and multiple signal classification (MUSIC) have been proposed \cite{chuang2015high}. 
Nonetheless, accurately estimating coherent signals from static (without Doppler) paths originating from the same transmitter, and effectively separating static and dynamic (with Doppler) paths, remains challenging. 

\added{In addition to the difficulties in AoA estimation in hybrid arrays, uplink sensing typically employs a bi-static configuration, where the spatial separation between UE and BS leads to clock asynchrony, introducing unavoidable carrier frequency offset (CFO) and timing offset (TO) \cite{lu2024integrated}.}
These offsets cause ambiguities in Doppler and delay estimation, and hinder coherence sensing processing that relies on accurate channel phase information. 
\added{To address the issue of clock asynchrony in uplink sensing, two primary categories of methods have been proposed: offset cancellation methods and offset estimating-and-compensating methods \cite{wu2024sensing}.}
\added{Offset cancellation methods aim to directly mitigate the effects of CFO and TO by exploiting their consistent characteristics observed across different domains.}
\added{For instance, cross-antenna techniques leverage the fact that receiving antennas typically share a single local oscillator.}
Methods such as cross-antenna cross-correlation (CACC) \cite{li2017indotrack, qian2018widar2} and channel state information ratio (CSI-R) \cite{zeng2019farsense, li2022csi} use one antenna as a reference to reduce common phase differences effectively. 
However, these methods face limitations in low SNR environments and scenarios with multiple targets. 
Specifically, CACC can introduce mirror components through conjugate multiplication, while CSI-R may induce nonlinear distortion, both of which degrade overall performance. 
Although some studies have attempted to address these issues \cite{ni2021uplink-CACC, ni2023uplink-CSIR}, achieving high estimation accuracy remains challenging due to the reliance on approximation techniques such as filtering and Taylor expansion. 
\added{On the other hand, offset estimating-and-compensating methods adopt an analytical approach to estimate and compensate for CFO and TO during the sensing process.}
\added{Studies such as SpotFi \cite{kotaru2015spotfi} and Dyn-MUSIC \cite{li2016dynamic-MUSIC} have utilized subcarriers for joint AoA-delay estimation using the MUSIC algorithm.}
\added{These methods mitigate the effects of CFO and TO by applying linear phase adjustments to the channel state information (CSI), yet residual errors persist, and performance degrades in low SNR.}
\added{Other notable approaches, such as JUMP \cite{pegoraro2024jump} and the multiple target Doppler estimation (MTDE) algorithm utilizing nullspace techniques \cite{zhao2023multiple}, offer potential improvements in both accuracy and robustness, but still leave room for further enhancement.}
\added{Table \ref{tab:comparison} provides a comparison of related sensing techniques, highlighting the capabilities and performance differences between the proposed method and these techniques.}


\begin{table*}[t]
    \renewcommand{\arraystretch}{1.4}  
    \centering
    \scriptsize  
    \caption{Comparison of related sensing techniques. }
    \resizebox{\textwidth}{!}{
        \begin{tabular}{|c|c|c|c|c|c|c|c|c|c|}
            \hline
            \multirow{2}{*}{\textbf{Literature}} & \multicolumn{2}{c|}{\textbf{System Configuration}} & \multicolumn{3}{c|}{\textbf{Parameters}} & \multicolumn{4}{c|}{\textbf{Capabilities}} \\ \cline{2-10} 
            & \textbf{Bi-static} & \textbf{Array} & \textbf{AoA} & \textbf{Delay} & \textbf{Doppler} & \textbf{Multi-Targets}$^\dagger$ & \textbf{Joint Estimation} & \textbf{CFO/TO Elimination} & \textbf{Frame}$\ddagger$ \\ \hline
            \textbf{H-MUSIC}\cite{chuang2015high} & No & Hybrid & Yes & No & No & Multi-Static & No & No & Pilot \\ \hline
            \textbf{Indotrack}\cite{li2017indotrack} & Yes & Digital & Yes & No & Yes & No & No & CACC & Pilot \\ \hline
            \textbf{Widar2.0}\cite{qian2018widar2} & Yes & Digital & Yes & Yes & Yes & 1 Static \& Multi-Dyn. & AoA-Doppler-Delay (EM) & CACC & Pilot \\ \hline
            \textbf{Farsense}\cite{zeng2019farsense} & Yes & Digital & No & Yes & No & No & No & CSI-R & Any \\ \hline
            \textbf{M-CACC}\cite{ni2021uplink-CACC} & Yes & Digital & Yes & Yes & Yes & 1 Static \& Multi-Dyn. & Doppler-Delay (Mirror-MUSIC) & CACC & Pilot \\ \hline
            \textbf{T-CSI-R}\cite{ni2023uplink-CSIR} & Yes & Digital & Yes & Yes & Yes & Multi-Dyn. & AoA-Delay (MUSIC) & CSI-R & Any \\ \hline
            \textbf{SpotFi}\cite{kotaru2015spotfi} & Yes & Digital & Yes & Yes & No & Multi-Static & AoA-Delay (2D-MUSIC) & Linear Fit & Pilot \\ \hline
            \textbf{Dyn-MUSIC}\cite{li2016dynamic-MUSIC} & Yes & Digital & Yes & Yes & No & 1 Static \& Multi-Dyn. & AoA-Delay (2D-MUSIC) & Linear Fit & Pilot \\ \hline
            \textbf{JUMP}\cite{pegoraro2024jump} & Yes & Digital & Yes & Yes & Yes & 1 Static \& Multi-Dyn. & AoA-Delay (Cross-Correlation) & Time-Domain Alignment & Pilot \\ \hline
            \textbf{MTDE}\cite{zhao2023multiple} & Yes & Digital & Yes & No & Yes & 1 Static \& Multi-Dyn. & AoA-Doppler (MUSIC) & Spatial Filter & Pilot \\ \hline
            \textbf{Proposed}\rule[-9pt]{0pt}{22pt} & Yes & Hybrid & Yes & Yes & Yes & \makecell{Multi-Static \& Multi-Dyn. \\ (Beam Scanning + Smoothing)} & AoA-Doppler-Delay (AB2FM) & Signal Ratio & Any \\ \hline
        \end{tabular}
    }
    \vspace{-2pt}
    \label{tab:comparison}

    \begin{itemize}
        \item[$\dagger$]: ``Static'' and ``Dyn.'' represent static and dynamic targets. E.g., ``Multi-Static \& Multi-Dyn.'' indicates the ability to estimate multiple static and dynamic paths simultaneously.
        \item[$\ddagger$]: ``Pilot'' indicates the use of pilot-based frames, while ``Any'' allows any frame type.
    \end{itemize}
    \hrule
\end{table*}

\subsection{Contributions and Paper Organization}
\label{subsec:contributions_and_paper_organization}
Targeting to address these challenges, we propose a high-resolution uplink sensing framework for estimating AoA, Doppler, and delay in a bi-static mmWave system with hybrid array. 
This framework addresses several key challenges: (1) mitigating CFO and TO between the transmitter and receiver, (2) overcoming the limitations imposed by a restricted number of RF chains in hybrid array, (3) simplifying the estimation algorithm to reduce complexity, (4) enabling the use of data frames for sensing to enhance resource efficiency and maintain compatibility with existing protocols, and (5) enabling direct simultaneous estimation of parameters, thus avoiding the need for separate parameter estimation and subsequent complicated parameter matching. 
The primary contributions of this paper are summarized as follows.
\begin{itemize}
    \item We present a practical and robust framework for bi-static ISAC mmWave systems based on OFDM signals, which consists of two primary stages: the \textit{AoA estimation stage} (AES) and the \textit{Doppler-delay estimation plus communication stage} (DDE\&CS). 
    In this framework, pilot sequences are required only during AES for AoA estimation.
    Doppler and delay parameters can be extracted during the DDE\&CS using the data frames, eliminating the need for specific pilot sequences and ensuring compatibility with existing protocols.
    This design supports hybrid array with at least two subarrays and is tailored for bi-static sensing configurations.

    \item We propose a reliable method to mitigate CFO and TO using a constructed reference signal. 
    \added{In order to construct this signal, we begin by introducing a high-resolution AoA estimation algorithm that applies frequency smoothing to the MUSIC algorithm in AES, facilitating the simultaneous estimation of both static and dynamic paths.}
    \added{Following this, we develop various beamforming (BF) methods tailored to specific requirements based on the estimated AoAs,} enhancing line-of-sight (LoS) signals for communication while suppressing non-line-of-sight (NLoS) signals to offer a ``clean'' reference for sensing. 
    Lastly, we propose a signal ratio-based method for mitigating CFO and TO using the reference signal, enabling unambiguous estimation of Doppler and delay parameters in DDE\&CS.

    \item Utilizing the CFO-and-TO-free signal, we present the \textit{AoA-Based 2D-FFT-MUSIC} (AB2FM) algorithm that achieves low-complexity and high-resolution joint Doppler-delay estimation based on the estimated AoAs. 
    Utilizing the signal subspace and 2D-FFT, this approach accelerates MUSIC spectrum searches and inherently pairs estimated parameters, eliminating the need for separate complicated parameter matching.
    Leveraging the estimated Dopplers, we can effectively distinguish between static and dynamic paths, as the initial independent AoA estimation provides AoAs for both types of paths without differentiation.
\end{itemize}

The remainder of this paper is organized as follows. 
Section \ref{sec:system_model} describes the application scenario and system model. 
Section \ref{sec:AoA_estimation} details the frequency smoothed AoA estimation algorithm and the LoS-path detection scheme. 
In Section \ref{sec:estimation_of_doppler_and_delay}, we present communication beam design methods, CFO and TO elimination techniques, and derive a low-complexity MUSIC-based algorithm for Doppler-delay estimation.
Section \ref{sec:simulation_results} and Section \ref{sec:conclusion} present simulation results and conclusions, respectively.

\textit{Notation: }Bold uppercase and lowercase letters denote matrices and vectors (e.g., $\mathbf{A}$ and $\mathbf{a}$).
$\mathbf{1}_{m, n}$ represents an $m\times n$ matrix with all elements equal to one.
$\left(\mathbf{A}\right)_{i, j}$ refers to the $(i, j)$-th element of matrix $\mathbf{A}$, $\left(\mathbf{a}\right)_i$ indicates the $i$-th element of vector $\mathbf{a}$.
Superscript $(\cdot)^H$, $(\cdot)^T$, and $(\cdot)^*$ denote the Hermitian transpose, transpose and complex conjugate, respectively.
$\otimes$ and $\odot$ represent the Kronecker and Hadamard products.
$\operatorname{vec}\{\cdot\}$ and $\mathbb{E}\{\cdot\}$ denotes the matrix vectorization and the expectation operator.
$\operatorname{diag}(\cdot)$ denotes the diagonal matrix constructed from a vector. 
$|a|$, $\|\mathbf{a} \|_2$ and $\|\mathbf{A} \|_F$ denote the absolute value of scalar $a$, the Euclidean norm of vector $\mathbf{a}$ and the Frobenius norm of matrix $\mathbf{A}$, respectively.
$\Theta_{\setminus {\theta}}$ denotes the set $\Theta$ with the element $\theta$ removed.
All indices in this paper start from $0$.

\section{System Model}
\label{sec:system_model}
\subsection{System Setup and Signal Models}
\label{subsec:scenario_and_received_signal}
\begin{figure}[t]
    \centering
    \includegraphics[width=0.8\linewidth]{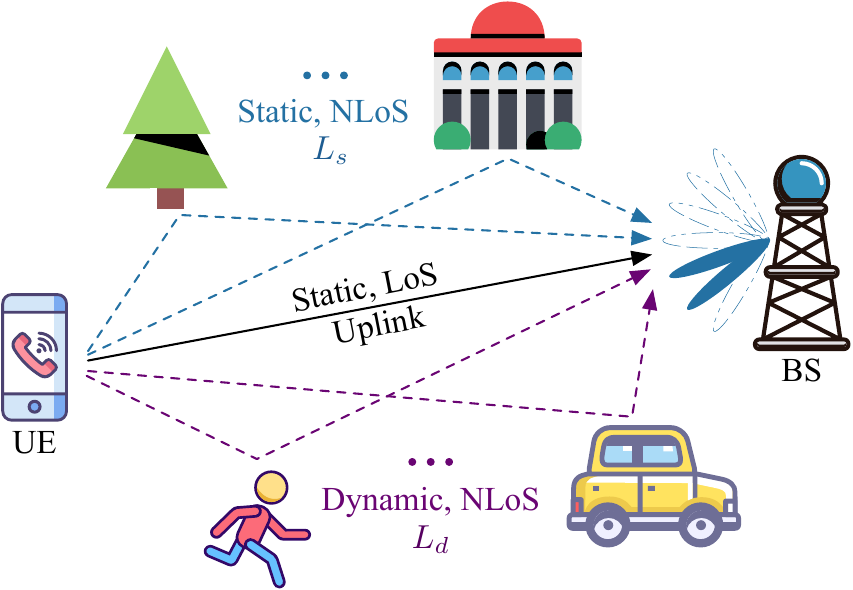}
    \caption{Illustration of the uplink sensing scenario, with $L_s$ static paths and $L_d$ dynamic paths.}
    \label{fig:Scenario}
\end{figure}

We consider an uplink ISAC scenario for a mmWave PMN, as illustrated in Fig. \ref{fig:Scenario}. 
The BS is equipped with a fully-connected hybrid array consisting of $N_{RF}$ analog subarrays, with a half-wavelength spaced uniform linear array (ULA) of $N_r$ antennas. 
For simplicity, we assume a ULA with $N_{RF}=2$ subarrays, namely, subarray $\mathcal{A}$ and subarray $\mathcal{B}$, as shown in the blue part of Fig. \ref{fig:system_structure}. 
\added{The uplink transmission is established with a static UE, which is equipped with an omnidirectional antenna, ensuring uniform signal radiation irrespective of direction.}
\added{Given the fixed BS-UE geometry and OFDM's widespread adoption in communication systems due to its robustness and spectral efficiency, as well as its potential for sensing applications \cite{zhang2021overview}, we employ OFDM modulation in our system configuration.}

Let $B$ and $K$ denote the signal bandwidth and the number of subcarriers, respectively. 
The subcarrier interval is thus $\Delta f = B/K$. 
The OFDM symbol period is $T_{s} = 1/\Delta f + T_{p}$ where $T_{p}$ is the period of cyclic prefix (CP). 
The transmitted signal for the $k$-th ($k=0, \cdots, K-1$) subcarrier is denoted as $x_k(t)$. 
For the BS, the received signal for the $k$-th subcarrier can be represented as \cite{zhang2021overview}
\begin{equation}
    \mathbf{r}_k(t) = \mathbf{h}_k(t)x_k(t)+ \mathbf{n}_k(t),
    \label{equ:received_signal_after_antennas}
\end{equation}
where $\mathbf{h}_k(t)$ is the CSI vector and $\mathbf{n}_k\sim \mathcal{C N}\left(\mathbf{0}, \sigma^2 \mathbf{I}_{N_r}\right)$ is the received additive white Gaussian noise (AWGN) at the BS. 

In a bi-static configuration, the transmitter and receiver are physically separated, leading to clock asynchrony. 
This asynchrony causes inevitable time-varying CFO, $f_o(t)$, and TO, $\tau_o(t)$, which are consistent among all paths. 
Let $L$ denote the total number of signal paths, and $L=L_s+L_d$, with $L_s$ and $L_d$ representing the number of static and dynamic paths, respectively.
\added{In mmWave systems, $L$ is typically very small \cite{heath2016overview}.}
\added{Static paths are defined by the absence of Doppler shifts, whereas dynamic paths exhibit nonzero Doppler shifts.}
We can express the CSI vector as follows, 
\begin{equation}
    \mathbf{h}_k(t) 
    = \underbrace{e^{j 2 \pi f_o(t)t}
    e^{-j 2 \pi \tau_o(t) f_k}}_{\eta_k(t)}
    \sum_{\ell = 0}^{L-1} 
    e^{-j 2 \pi \tau_{\ell} f_k}
    e^{j 2 \pi f_{D, \ell} t} 
    \beta_\ell 
    \mathbf{a}(\theta_{\ell}),
    \label{equ:CSI_k_subcarrier}
\end{equation}
where $\eta_k(t)$ represents the combined term of CFO and TO, $f_k = f_0+k\Delta f$ is the frequency of the $k$-th subcarrier, with $f_0$ being the initial frequency. 
The AoA of the $\ell$-th path is $\theta_{\ell}$, and the set of all AoAs is denoted as $\Theta$. 
The complex gain, propagation delay, and Doppler frequency of the $\ell$-th path are $\beta_{\ell}$,  $\tau_\ell = d_\ell/c$, and $f_{D, \ell} = v_\ell f_0/c$, respectively, where $d_\ell$ is the distance traveled by the signal, $v_\ell$ is radial speed of the reflector,  and $c$ is the speed of light.  
For the simplicity of expression, we assume that paths indexed from $\ell=0$ to $L_s-1$ are static, implying $f_{D, \ell} = 0, \forall \ell = 0, \cdots, L_s-1$, while the remaining paths are dynamic.
\added{Moreover, we assume that each path is characterized by a unique AoA, Doppler, and delay.}
Notably, the path with index $\ell = 0$ is defined as the LoS path and the others are NLoS paths. 
The steering vector $\mathbf{a}(\theta)$ for a half-wavelength spaced ULA with $N_r$ antennas is given by
\begin{equation}
    \mathbf{a}(\theta)=\left[
    1, e^{-j\pi \sin(\theta)}, \cdots, e^{-j\pi(N_r-1)\sin(\theta)}
    \right]^T.
    \label{equ:steering_vector}
\end{equation}

Assuming a BF matrix for the analog subarrays of BS, denoted as $\mathbf{W} = \left[\mathbf{w}_{\mathcal{A}}, \mathbf{w}_{\mathcal{B}} \right] \in\mathbb{C}^{N_r\times 2}$, with $\|\mathbf{w}\|_2^2=1$, the received signal vector after analog subarrays for the $k$-th subcarrier, can be represented as a vector $\mathbf{y}_k\in\mathbb{C}^{2\times 1}$:
\begin{equation}
    \begin{aligned}
        \mathbf{y}_{k}(t) &= \mathbf{W}^H\mathbf{r}_k(t)\\
        &= \mathbf{W}^H\mathbf{h}_k(t)x_k(t)+\mathbf{W}^H\mathbf{n}_k(t).
    \end{aligned}
    \label{equ:received_signal_after_BF_k_subcarrier}
\end{equation}

\begin{figure*}[tb!]
    \centering
    \includegraphics[width=\linewidth]{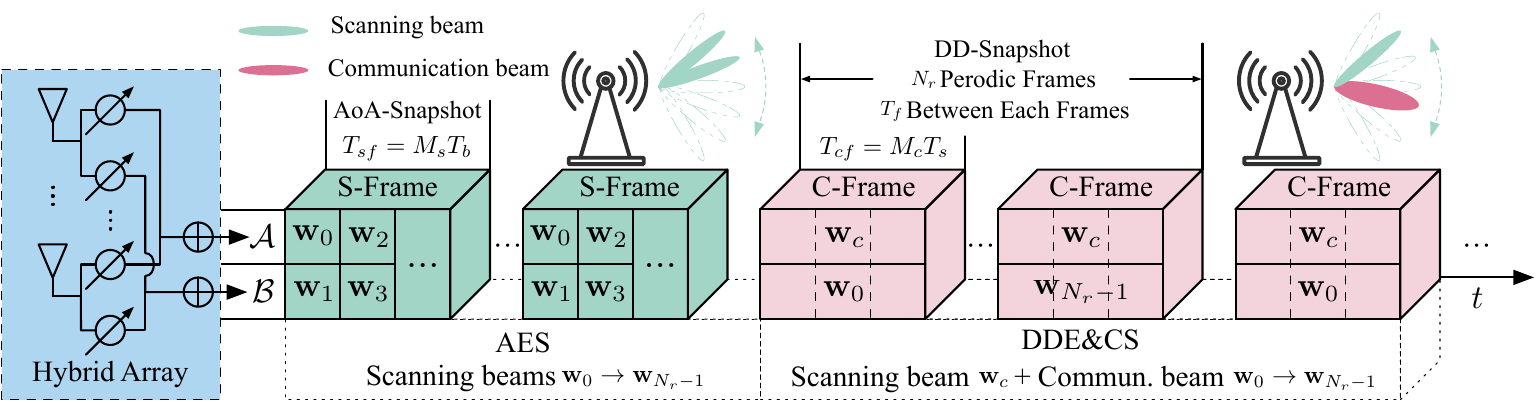}
    \caption{Illustration of the proposed ISAC framework, where $\mathbf{w}$ is a BF vector for the corresponding subarray at each timeslot. 
    S-frame and C-frame stand for frames used in AES and DDE\&CS, respectively.
    AoA-snapshot and DD-snapshot stand for snapshots used in  AoA estimation and Doppler-delay estimation, with the interval of these blocks representing the duration of collecting a snapshot.
    } 
    \label{fig:system_structure}
\end{figure*}

\subsection{\added{Basic Assumption of Phase Stability}}
\label{subsec:basic_assumption_of_phase_stability}
\added{
    In asynchronous ISAC systems, both clock variations and the parameters of dynamic objects are time-varying. 
    Nonetheless, reliable parameter estimation necessitates certain parameters to remain relatively stable over specific intervals. 
    To streamline our analysis, we adopt the following assumptions:
}
\begin{itemize}
    \item \added{\textbf{Assumption 1}: Doppler and delay are assumed to be constant within each millisecond-scale coherence processing interval (CPI)} \cite{zhang2021overview};
    \item \added{\textbf{Assumption 2}: CFO and TO are considered constant over a microsecond timescale.}
\end{itemize}

\added{
    As discussed in \cite{zhang2021overview}, a CPI is defined as the duration over which the parameters of the objects remain nearly constant. 
    The length of a CPI depends on the mobility of objects in the channel and typically spans a few milliseconds.
    In contrast, CFO and TO are characterized by more rapid variations due to the inherent instability of the local oscillator (LO).
    These parameters can be approximated as constant only over significantly shorter durations, typically on the order of microseconds.
    For instance, assuming an LO exhibits a typical clock stability of 10 parts-per-million (ppm) at 100MHz, the timing offset variation over 1us is around 10ps, corresponding to $ 10\text{ps} \times 100\text{MHz} = 1/1000$ of a clock cycle.
    This minor variation is insignificant for coherent processing that relies on phase information, even when extended to tens of microseconds.
}

\subsection{Proposed Framework}
\label{subsec:Proposed_Framework}
Next, we introduce the proposed ISAC framework, which includes two stages, as shown in Fig. \ref{fig:system_structure}. 
The figure also depicts the relationships among BF vectors, frames, and snapshots in each stage. 
The two stages are briefly described below, and will be detailed later.
\begin{itemize}
    \item \textit{\textbf{AoA Estimation Stage (AES).}} In this stage, the UE sequentially transmits multiple sensing frames (S-frames), each composed of pilot sequences, for AoA estimation. 
    The BS receives these signals using a BF matrix $\mathbf{W}$, where each BF vector ($\mathbf{w}_{\mathcal{A}}, \mathbf{w}_{\mathcal{B}}$) changes at a fixed interval $T_b$, typically spanning 1 to 4 OFDM symbols \cite{giordani2018tutorial} for beam scanning, which we refer to as a beam switching timeslot (BS-TS).
    Each S-frame comprises $M_s$ BS-TSs, forming a snapshot for AoA estimation.
    The time-varying BF vectors, denoted as $\mathbf{w}_b$, are selected from a predefined codebook and generate corresponding scanning beams. 
    The number of S-frames depends on the signal-to-noise ratio (SNR) level, e.g., more frames for lower SNR.
    A high-resolution AoA estimation algorithm is then executed to estimate the AoAs of all paths, followed by a procedure to identify the AoA of the LoS path. 
    It is worth noting that accurate estimation of AoAs for both LoS and NLoS paths is essential for subsequent Doppler-delay estimation.
    \added{In particular, in the DDE\&CS stage, the AoAs of both static and dynamic NLoS paths are further leveraged to design the BF vectors that mitigate the phase interference introduced by these paths; by filtering out such interference, the LoS path can be retrieved as a ``clean'' reference for aiding CFO and TO eliminating.}
    The detailed procedure of the AES is described in Section \ref{sec:AoA_estimation}. 
    
    \item \textit{\textbf{Doppler-Delay Estimation Plus Communication Stage (DDE\&CS).}} 
    In this stage, sensing does not require knowledge of the transmitted symbols and can be based on either unknown data symbols or pilots, which offers flexibility that ensures compatibility with current communication standards, such as 5G NR, LTE, etc.
    We refer to the signals transmitted during this stage as C-frames.
    As depicted in Fig. \ref{fig:system_structure}, $M_c$ and $T_f$ denote the frame length and the interval between frames in DDE\&CS. 
    The BF matrix $\mathbf{W}$ includes two types of beams: $\mathbf{w}_{\mathcal{A}} = \mathbf{w}_c$, the communication beam that also serves as a reference beam, and $\mathbf{w}_{\mathcal{B}} = \mathbf{w}_b$, the scanning beam. 
    Each snapshot for Doppler-delay estimation, referred to as a DD-snapshot, consists of data collected over $N_r$ consecutive C-frames.
    The scanning beam changes across frames but remains constant within a frame and is selected from the same predefined codebook used in AES, enabling subarray $\mathcal{B}$ to scan the entire spatial domain within each DD-snapshot.
    The communication beam remains unchanged unless the AoAs are updated. 
    Subsequently, the CFO and TO are eliminated using the concept of CSI-R, and Doppler and relative delay are finally estimated. 
    Comprehensive details of the DDE\&CS are provided in Section \ref{sec:estimation_of_doppler_and_delay}.
\end{itemize}

Reviewing the required signal structures in the two stages above, we can see that they comply well with conventional ones in mmWave communications.
Our proposed BF operation also has little impact on communications.
Therefore, our scheme offers great flexibility and compatibility when being implemented in current communication systems including mobile networks.

\section{AES: AoA Estimation}
\label{sec:AoA_estimation}

\subsection{Frame Structure for AoA Estimation}
\label{subsec:initial_beam_scanning_strategy}
In hybrid arrays, the reduced dimension of the received signal vector hinders AoAs estimation, necessitating beam scanning to recover spatial information. 
Consequently, we design a beam scanning method with a codebook to guide the scanning directions.

Following the beamspace representation \cite{heath2016overview}, we design the codebook for BF vectors using uniformly spaced spatial angles $u_{b} = b\Delta u = b/N_r, b = 0, \cdots, N_r-1$. 
The codebook $\mathcal{W}$ comprises steering vectors corresponding to $\theta_b = \arcsin(u_b/2)$, i.e., $\mathcal{W} = \{\mathbf{w}_0, \cdots, \mathbf{w}_{N_r-1}\}$, where $\mathbf{w}_b = \mathbf{a}(\theta_b)$. 
The length of the S-frame $M_s$ should be at least $N_r/2$ BS-TSs (we choose $M_s = N_r/2$) to ensure full spatial coverage. 
The BF vectors are switched sequentially from the codebook $\mathcal{W}$ during a  BS-TS period. 
Specifically, the BF matrix can be represented as $\mathbf{W}_m = \left[\mathbf{w}_{2m}, \mathbf{w}_{2m+1}\right], m = 0, \cdots, M_s-1$. 
Each S-frame is repeated $P$ times to gather sufficient data, specifically, $P$ AoA-snapshots, for accurate AoA estimation, with $P$ depending on the SNR conditions. 

\subsection{Frequency Smoothed MUSIC Algorithm for AoA Estimation}
\label{subsec:frequency_smoothing_for_coherent_signals}
\added{
    For the $m$-th BS-TS within a frame after CP and pilot removal, the received signal can be written in discrete form as
    \begin{align}
        \tilde{\mathbf{y}}_k[m] 
        &= \eta_k[m]\mathbf{W}_m^H \mathbf{A} \tilde{\mathbf{h}}_k[m] + \mathbf{W}_m^H\mathbf{n}_k[m],
        \label{equ:received_signal_after_BF_compact_form_h_k}
    \end{align}
    where 
    \begin{align}
        \eta_k[m] &= e^{j 2 \pi f_o[m]mT_b} e^{-j 2 \pi \tau_o[m] f_k},\label{equ:eta_TO_CFO}\\
        \mathbf{A} &= \left[\mathbf{a}(\theta_{0}), \cdots, \mathbf{a}(\theta_{L-1})\right],
        \label{equ:steering_matrix}
    \end{align}
    and
    \begin{equation}
        \tilde{\mathbf{h}}_k[m] =  \left[\begin{array}{c}
            e^{-j 2 \pi \tau_{0} f_k} e^{j 2 \pi f_{D, 0} mT_b}  \beta_0\\
            \vdots\\
            e^{-j 2 \pi \tau_{{L-1}} f_k} e^{j 2 \pi f_{D, {L-1}} mT_b}  \beta_{L-1}
        \end{array}\right],
        \label{equ:CSI_discrete_k_subcarrier}
    \end{equation}
    respectively. 
    Specifically, $\tilde{\mathbf{h}}_k[m]$, derived from (\ref{equ:CSI_k_subcarrier}), retains only the effective path information, including Doppler, delay, and complex gain, while excluding the common terms CFO and TO.
    $f_o[m]$ and $\tau_o[m]$ are CFO and TO in discrete form.
}

\added{
    During beam scanning, which involves $M_s$ sequential beam switches within a continuous scan cycle, the total scanning time for a single scan is typically on the order of several microseconds.
    For instance, both the literature \cite{sadhu202224} and commercial products, such as phased array antenna module (PAAM), report BS-TS as low as 8ns.
    Even when accounting for beam configuration time, the BS-TS remains as short as approximately 200ns.
    For an antenna array comprising 32 elements and 2 RF chains, the total duration for beam switching required for a single scan (i.e., the duration of one S-Frame) is given by $16\times 200\text{ns} = 3.2\text{us}$.
    Within this interval, the phase variation induced by LO instability remains negligible, as discussed in Section \ref{subsec:basic_assumption_of_phase_stability}.
}

\added{Accordingly, we assume $\tilde{\mathbf{h}}_k[m] \approx \tilde{\mathbf{h}}_k[m']$ and $\eta_k[m] \approx \eta_k[m'], \forall m\neq m', m, m' = 0, \cdots, M_s-1$.}
By stacking $\mathbf{y}_k[m]$ from $m=0$ to $m=M_s-1$ within an S-Frame, the snapshot vector $\boldsymbol{\varsigma}_k\in \mathbb{C}^{N_r\times 1}$ is constructed as
\added{
    \begin{equation}
        \begin{aligned}
            \boldsymbol{\varsigma}_k 
            &= \left[\begin{array}{c}
            \eta_k[0] \mathbf{W}_{0}^H\mathbf{A}\tilde{\mathbf{h}}_{k}[0]\\
            \vdots \\
            \eta_k[M_s-1] \mathbf{W}_{M_s-1}^H\mathbf{A}\tilde{\mathbf{h}}_{k}[M_s-1]
            \end{array}
            \right] + \mathbf{n}_{\boldsymbol{\varsigma}_k} \\
            &\approx \eta_k
            \mathbf{W}_s^H
            \mathbf{A}\tilde{\mathbf{h}}_{k} + \mathbf{n}_{\boldsymbol{\varsigma}_k},
        \end{aligned}
        \label{equ:measurement_vector_AoA_k_subcarrier}
    \end{equation}
    where $\tilde{\mathbf{h}}_{k} = \tilde{\mathbf{h}}_k[0]$, $\eta_k = \eta_k[0]$, 
}
\begin{equation}
    \mathbf{n}_{\boldsymbol{\varsigma}_k} =  \left[\begin{array}{c}
        \mathbf{W}_0^H\mathbf{n}_k[0] \\
        \vdots\\
        \mathbf{W}_{M_s-1}^H\mathbf{n}_k[M_s-1]
        \end{array}
        \right],
\end{equation}
and 
\begin{equation}
    \mathbf{W}_s = \left[\mathbf{W}_0, \cdots, \mathbf{W}_{M_s-1}\right] = 
    \left[\mathbf{w}_0, \cdots, \mathbf{w}_{N_r-1}\right],
    \label{equ:BF_matrix_all_beams}
\end{equation}
which is referred to as the stacked BF matrix. 
This scanning-and-stacking process expands the signal dimension to match the number of antennas, effectively forming a \textit{virtual sensor array} with a dimension of $N_r$. 
Additionally, the BF vectors $\mathbf{w}_b$ defined by the codebook are orthonormal, resulting in the covariance matrix of the stacked noise vector $\mathbf{n}_{\boldsymbol{\varsigma}_k}$ remaining $\sigma^2\mathbf{I}_{N_r}$.

\added{
    The covariance matrix of (\ref{equ:measurement_vector_AoA_k_subcarrier}) is therefore given by
    \begin{equation}
        \begin{aligned}
            \mathbf{R}_{\boldsymbol{\varsigma}_k} &= \mathbb{E}\left\{\boldsymbol{\varsigma}_k \boldsymbol{\varsigma}_k^H\right\}\\
            &= \mathbf{W}_s^H\mathbf{A}
            \tilde{\mathbf{H}}_k
            \mathbf{A}^H\mathbf{W}_s + 
            \sigma^2\mathbf{I}_{N_r},
        \end{aligned}
        \label{equ:covariance_matrix_AoA_k_subcarrier}
    \end{equation}
    where
    \begin{equation}
        \tilde{\mathbf{H}}_k = \mathbb{E}\left[|\eta_k|^2\tilde{\mathbf{h}}_{k}\left(\tilde{\mathbf{h}}_{k}\right)^H\right]
        \label{equ:H_k_weighted_average}
    \end{equation}
    represents a weighted expectation of the outer products of $\tilde{\mathbf{h}}_{k}$, with $|\eta_k|^2$ as the weighting factor.
}

\added{
    In our system design, all multipath signals originate from the same UE and share the same transmitted signal, indicating that the received signals from the $L_s$ static paths are coherent.
    Consequently, the rank of $\mathbf{R}_{\boldsymbol{\varsigma}_k}$ is $L_d + 1$ (see Appendix A), allowing the MUSIC algorithm to resolve up to $L_d + 1$ signals when using $\mathbf{R}_{\boldsymbol{\varsigma}_k}$.
    These resolved signals include $L_d$ dynamic paths and at most one combined static path (if a dominant LoS path exists).
    However, in this case, the estimated AoA of the combined static path may deviate from the actual LoS path direction due to the influence of other static paths.
}

To detect all static paths with high accuracy, we introduce a smoothing technique to restore the rank of $\tilde{\mathbf{H}}_k$ to $L$ ($L < N_r$). 
The smoothing is applied to the covariance matrix in the frequency domain and we refer to it as \textit{frequency smoothing}. 
This method maintains the full dimension of the virtual sensor array, effectively preventing the dimensional reduction that occurs with \textit{spatial smoothing} technique introduced by \cite{shan1985spatial}. 
The frequency smoothed covariance matrix can be written as
\added{
    \begin{equation}
        \begin{aligned}
            \overline{\mathbf{R}}_{\boldsymbol{\varsigma}} &= \frac{1}{K} \sum_{k=0}^{K-1} \mathbf{R}_{\boldsymbol{\varsigma}_k}\\
            &= \mathbf{W}_s^H\mathbf{A}\left(\frac{1}{K}\sum_{k=0}^{K-1}
            \tilde{\mathbf{H}}_k\right)
            \mathbf{A}^H\mathbf{W}_s + \sigma^2\mathbf{I}_{N_r}\\
            &= \mathbf{W}_s^H\mathbf{A}\mathbf{\overline{H}} \mathbf{A}^H\mathbf{W}_s + \sigma^2\mathbf{I}_{N_r},
        \end{aligned}
        \label{equ:covariance_matrix_AoA_smoothed}
    \end{equation}
    where $\mathbf{\overline{H}} = \frac{1}{K}\sum_{k=0}^{K-1}
            \tilde{\mathbf{H}}_k$. 
}

\added{
    \begin{proposition}
        If $\tau_i \neq \tau_j, \forall i \neq j$ and $K\geq L$, the rank of $\overline{\mathbf{H}}$ is $L$, regardless of the signal coherence. 
        \label{lem:rank_H_after_smoothing}
    \end{proposition}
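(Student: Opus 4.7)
My plan is to rewrite $\overline{\mathbf{H}}$ in a form that decouples the possibly rank-deficient gain covariance from the frequency diversity provided by distinct delays, and then show that their interaction restores full rank. Because $|\eta_k|=1$, equation (\ref{equ:H_k_weighted_average}) simplifies to $\tilde{\mathbf{H}}_k = \mathbf{D}_k\mathbf{B}\mathbf{D}_k^H$, where $\mathbf{D}_k = \operatorname{diag}(e^{-j2\pi\tau_0 f_k},\ldots,e^{-j2\pi\tau_{L-1} f_k})$ and $\mathbf{B} := \mathbb{E}[\boldsymbol{\beta}\boldsymbol{\beta}^H]$ with $\boldsymbol{\beta}=[\beta_0,\ldots,\beta_{L-1}]^T$. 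Averaging over the $K$ subcarriers then gives
\begin{equation*}
\overline{\mathbf{H}} \;=\; \frac{1}{K}\sum_{k=0}^{K-1}\mathbf{D}_k\mathbf{B}\mathbf{D}_k^H,
\end{equation*}
an $L\times L$ positive semidefinite matrix, so proving $\operatorname{rank}(\overline{\mathbf{H}})=L$ is equivalent to proving $\overline{\mathbf{H}}\succ 0$.

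Next, I would argue by contradiction. Suppose $\mathbf{x}\neq\mathbf{0}$ satisfies $\mathbf{x}^H\overline{\mathbf{H}}\mathbf{x}=0$. Expanding the quadratic form yields $\sum_{k=0}^{K-1}\|\mathbf{B}^{1/2}\mathbf{D}_k^H\mathbf{x}\|_2^2=0$, so each vector $\mathbf{z}_k := \mathbf{D}_k^H\mathbf{x}$ lies in $\ker(\mathbf{B})$. Entry-wise $(\mathbf{z}_k)_\ell = x_\ell e^{j2\pi\tau_\ell f_k}$, so $\mathbf{z}_k$ is supported on $S^c := \{\ell : x_\ell \neq 0\}$, and the restriction $[\mathbf{z}_0,\ldots,\mathbf{z}_{K-1}]|_{S^c}$ factors as $\operatorname{diag}(\mathbf{x}_{S^c})$ times a Vandermonde block with bases $\{e^{j2\pi\tau_\ell\Delta f}:\ell\in S^c\}$. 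The hypothesis $\tau_i\neq\tau_j$ (within the usual OFDM delay range) makes these bases distinct, and $K\geq L\geq|S^c|$ guarantees that the Vandermonde block has full row rank $|S^c|$. Combined with the invertibility of $\operatorname{diag}(\mathbf{x}_{S^c})$, this shows that $\{\mathbf{z}_k\}_{k=0}^{K-1}$ spans the entire coordinate subspace $T := \{\mathbf{y}\in\mathbb{C}^L : y_\ell = 0\text{ for all }\ell\notin S^c\}$, so $T\subseteq\ker(\mathbf{B})$.

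The final step is to turn $T\subseteq\ker(\mathbf{B})$ into a contradiction. For every $\ell\in S^c$, $\mathbf{e}_\ell\in\ker(\mathbf{B})$ forces the $\ell$-th column of $\mathbf{B}$ to vanish, and in particular $B_{\ell\ell}=\mathbb{E}[|\beta_\ell|^2]=0$. This contradicts the standing physical assumption that every multipath component has positive average energy, so $\mathbf{x}=\mathbf{0}$, $\overline{\mathbf{H}}\succ 0$, and $\operatorname{rank}(\overline{\mathbf{H}})=L$. I expect the main obstacle to be exactly this partial-support case: when $\mathbf{x}$ has some zero entries, the Vandermonde span covers only the coordinate subspace $T$ and not all of $\mathbb{C}^L$, so one cannot conclude $\ker(\mathbf{B})=\{0\}$ directly (which would require $\mathbf{B}$ to have full rank and defeat the ``regardless of signal coherence'' claim); instead, the contradiction must be driven by PSD-ness combined with the strict positivity of the diagonal of $\mathbf{B}$, which is the subtle point that makes the argument go through.
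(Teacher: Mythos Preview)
Your proof is correct and rests on the same two ingredients as the paper's: factoring each $\tilde{\mathbf{H}}_k$ as a delay-diagonal conjugation of a $k$-independent covariance, then using the Vandermonde structure from distinct delays together with the strict positivity of the diagonal of that covariance. The paper executes this by writing $\overline{\mathbf{H}}=\frac{1}{K}\mathbf{G}\mathbf{G}^H$ with $\mathbf{G}=[\boldsymbol{\Psi},\boldsymbol{\Phi}_\tau\boldsymbol{\Psi},\ldots,\boldsymbol{\Phi}_\tau^{K-1}\boldsymbol{\Psi}]$ (where $\tilde{\mathbf{H}}_0=\boldsymbol{\Psi}\boldsymbol{\Psi}^H$) and arguing directly that the $L$ rows of $\mathbf{G}$ are linearly independent because each row is a Vandermonde vector tensored with a nonzero row of $\boldsymbol{\Psi}$; you instead run the dual null-space argument, showing that any $\mathbf{x}$ in the kernel of $\overline{\mathbf{H}}$ forces a coordinate subspace into $\ker(\mathbf{B})$ and hence kills a diagonal entry of $\mathbf{B}$. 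The two are equivalent (nonzero rows of the square root $\boldsymbol{\Psi}$ is exactly $B_{\ell\ell}>0$), but your version has the virtue of making the ``positive average energy per path'' assumption explicit, whereas the paper hides it in the phrase ``ignoring the nonzero coefficients $\psi_{i,j}$.'' One cosmetic point: your $\mathbf{B}$ should, strictly speaking, absorb the snapshot-dependent Doppler phases rather than just the raw $\beta_\ell$, but this does not affect the argument since you only use $B_{\ell\ell}>0$.
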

    \begin{proof}
        See Appendix \ref{sec:appendix_B}.
    \end{proof}
}

\added{
    Base on Proposition \ref{lem:rank_H_after_smoothing}, $\overline{\mathbf{H}}$ retains rank $L$, which ensures the presence of an $L$-dimensional signal subspace within $\overline{\mathbf{R}}_{\boldsymbol{\varsigma}}$.
    Utilizing this property, we perform eigenvalue decomposition (EVD) on $\overline{\mathbf{R}}_{\boldsymbol{\varsigma}}$ to extract the AoAs of all multipaths.
}
The resulting eigenvector matrix $\mathbf{U}_\text{A} = \left[\mathbf{U}_{\text{A},s}\, \mathbf{U}_{\text{A},n}\right] \in \mathbb{C}^{N_r\times N_r}$ separates the signal subspace $\mathbf{U}_{\text{A},s}\in \mathbb{C}^{N_r\times L}$ and the noise subspace $\mathbf{U}_{\text{A},n}\in \mathbb{C}^{N_r\times (N_r-L)}$, with $\mathbf{U}_{\text{A},s}\perp \mathbf{U}_{\text{A},n}$.
The MUSIC spectrum is defined as
\begin{equation}
    P_{\text{AoA}}(\theta) = \frac{1}{
        \left\|\tilde{\mathbf{p}}^H(\theta)\mathbf{U}_{\text{A},n}\right\|_2^2
        },
    \label{equ:MUSIC_spectrum_AoA}
\end{equation}
where $\tilde{\mathbf{p}}(\theta) = \mathbf{W}_s^H\mathbf{a}(\theta)$ is a candidate vector. 
The MUSIC spectrum $P_{\text{AoA}}(\theta)$ exhibits $L$ peaks, each corresponding to an AoA of a path.
These estimated angles, denoted as $\tilde{\theta}_{\ell}$, form the set $\tilde{\Theta}$, encompassing both static and dynamic paths.

\subsection{LoS Path Determination}
\label{subsec:LoS_path_detection}
Identifying the accurate AoA of the LoS path is crucial for implementing efficient BF for communication and sensing.
This AoA is included in the set $\tilde{\Theta}$, from which the correct angle needs to be selected. 
One common approach selects the AoA by identifying the strongest beam direction collected during the beam scanning period and matching it with the closest angle in $\tilde{\Theta}$.
However, this method has low resolution due to the large scanning step size. 
We propose a LoS path determination method with higher resolution based on array response for hybrid arrays.

The power of the ideal array response for a ULA at an AoA of $\theta$ can be calculated by
\begin{equation}
    \begin{aligned}
        \rho(\theta_r, \theta) &= |\mathbf{w}(\theta_r)^H\mathbf{a}(\theta)|^2\\
        &=\frac{\sin^2\left(\frac{N(\pi \sin(\theta_r))-\pi\sin(\theta)}{2}\right)}{\sin^2\left(\frac{\pi \sin(\theta_r)-\pi\sin(\theta)}{2}\right)},
    \end{aligned}
    \label{equ:array_response_ideal}
\end{equation}
where the BF vector $\mathbf{w}(\theta)$ is defined as $\mathbf{a}(\theta)$, $\theta$ is the actual AoA of the received signal, and $\theta_r$ is the reference angle towards which the beamformer is steered. 
This function exhibits a sinc-like pattern and reaches its maximum power when $\theta_r = \theta$. 
As subcarriers do not affect LoS path detection, we assume the received signal from (\ref{equ:received_signal_after_antennas}) on the $k$-th subcarrier after CP and pilot removal, denoted as $\tilde{\mathbf{r}}_k[m]$, is available; the array response can then be represented as
\begin{equation}
    \begin{aligned}
        \left|\mathbf{w}^H (\theta_r) \tilde{\mathbf{r}}_k[m] \right|^2 &= \left|\mathbf{w}^H(\theta_r)
        \sum_{\ell = 0}^{L-1} 
        e^{j 2 \pi \left(f_{D, \ell} mT_s - \tau_{\ell} f_k\right)}
        \beta_\ell 
        \mathbf{a}(\theta_{\ell})\right|^2.
    \end{aligned}
    \label{equ:array_response_of_received_signal}
\end{equation}

With a dominant LoS path, where $\beta_0\gg \beta_\ell, \forall \ell = 1, \cdots, L-1$, the AoA for the LoS path can be estimated by finding the maximum value as follows,
\begin{equation}
    \begin{aligned}
        \tilde{\theta}_s &= \arg\max_{\theta_r} \left|\mathbf{w}(\theta_r)^H\tilde{\mathbf{r}}_k[m]\right|^2\\
        &= \arg\max_{\theta_r} \left|b_0\mathbf{w}(\theta_r)^H\mathbf{a}(\theta_{0})\right|^2.
    \end{aligned}
    \label{equ:LoS_detection_theta}
\end{equation}

However, given the implementation of hybrid arrays, we cannot directly obtain $\tilde{\mathbf{r}}_k[m]$. 
Hence, we extract $\tilde{\mathbf{r}}_k$ from the stacked received signal (\ref{equ:measurement_vector_AoA_k_subcarrier}). 
According to the definition of codebook $\mathcal{W}$, the stacked BF matrix $\mathbf{W}_s$ is a DFT matrix, meaning $\mathbf{W}_s\mathbf{W}_s^H = \mathbf{I}$. 
\added{Within an S-Frame, where the CFO and TO remain constant, we can extract $\tilde{\mathbf{r}}_k[m]$ by calculating}
\begin{equation}
    \tilde{\mathbf{r}}_k[m] \approx \mathbf{W}_s\boldsymbol{\varsigma}_k, \forall m = 0, \cdots, M_s-1. 
    \label{equ:received_signal_after_DFT}
\end{equation}
To simplify notation, we drop the index $m$ of (\ref{equ:received_signal_after_DFT}) in the subsequent analysis. 
The array response is then determined by performing a grid search over $\theta_r$ using (\ref{equ:LoS_detection_theta}). 
Denote $N'$ as the number of grids, and $\mathbf{w}(\theta)$ corresponds to a column in the DFT matrix when $\theta = \arcsin (u/2)$, where $u = n/N'$ and $n$ ranges from $0$ to $N'-1$. 
Consequently, the array response can be evaluated using an $N'$-point FFT on $\tilde{\mathbf{r}}_k$, with additional zero-padding applied to analyze a denser grid of potential $\theta$ values, yielding $\tilde{\theta}_s$. 
The estimated $\tilde{\theta}_s$ is then matched with the closest values in $\tilde{\Theta}$, from which the high-resolution AoA $\tilde{\theta}_0$ for the LoS path is finally selected. 
\added{The AoA estimation and LoS detection algorithm is summarized at Algorithm \ref{alg:AoA_LoS}.}

\begin{algorithm}[tb!]
    \caption{Proposed Algorithm for AoA Estimation and LoS Detection}
    \begin{algorithmic}[1]
        \STATE \textbf{Input:} $M_s$, $W_s$, $L$, $K$, number of grid points for AoA search $G_{\theta}$;

        \STATE \textbf{Initialization:} Set the estimated AoA vector $\tilde{\boldsymbol{\theta}}$ as zero vector of length $L$, and create a dictionary for the candidate vector $\tilde{\mathbf{p}}(\theta)$ by obtaining a uniform grid of the angular domain with $G_{\theta}$ points;

        \STATE Obtain $\boldsymbol{\varsigma}_k$ for $k = 0, \cdots, K-1$ and $P$ snapshots according to (\ref{equ:measurement_vector_AoA_k_subcarrier});

        \STATE Calculate $\overline{\mathbf{R}}_{\boldsymbol{\varsigma}}$ using (\ref{equ:covariance_matrix_AoA_k_subcarrier}) and (\ref{equ:covariance_matrix_AoA_smoothed});

        \STATE Compute the eigenvector matrix $\mathbf{U}_{\text{A}, s}$ corresponding to the $L$ largest eigenvalues of $\overline{\mathbf{R}}_{\boldsymbol{\varsigma}}$ using iterative methods;

        \STATE Obtain the MUSIC spectrum $\mathbf{P}_{\text{AoA}}$ according to (\ref{equ:MUSIC_spectrum_AoA}) by evaluating candidate vectors $\tilde{\mathbf{p}}(\theta)$;

        \STATE Search for the $L$ largest peaks of $\mathbf{P}_{\text{AoA}}$, obtain the corresponding angles $\tilde{\boldsymbol{\theta}}$; \hfill \makebox[0cm][r]{\textcolor{gray}{\(\triangleright\) \textit{AoA Estimation}}}

        \STATE Calculate $\tilde{\mathbf{r}}_k$ using $\boldsymbol{\varsigma}_k$ according (\ref{equ:received_signal_after_DFT});

        \STATE Evaluate (\ref{equ:LoS_detection_theta}) by grid search over $\theta$ using $N^\prime$-points FFT and identify the angle $\tilde{\theta}_s$ of the maximum value;

        \STATE Match $\tilde{\theta}_s$ with the closest value from $\tilde{\boldsymbol{\theta}}$, define it as the LoS angle $\theta_0$, and form the AoA set $\tilde{\Theta}$ with the remaining values of $\tilde{\boldsymbol{\theta}}$; \hfill \makebox[0cm][r]{\textcolor{gray}{\(\triangleright\) \textit{LoS Detection}}}

        \STATE \textbf{Output:} AoAs set $\tilde{\Theta}$.
    \end{algorithmic}
    \label{alg:AoA_LoS}
\end{algorithm}

\section{DDE\&CS: Delay and Doppler Estimation}
\label{sec:estimation_of_doppler_and_delay}
In a bi-static configuration, conducting both communication and sensing simultaneously in a communication system equipped with a hybrid array presents significant challenges. 
These challenges primarily arise from three key issues. 

\begin{itemize}
    \item \textbf{Presence of CFO and TO.} 
    As previously discussed, clock asynchrony in bi-static configurations inevitably causes CFO and TO, leading to ambiguity in Doppler and delay estimation. 
    Unlike in communication processes, where CFO and TO are jointly canceled with Doppler and delay, respectively, sensing requires their independent cancellation to ensure accurate Doppler and delay estimation for each path \cite{wu2024sensing}. 
    Therefore, a real-time mitigation of CFO and TO is essential. 

    \item \textbf{Sensing range limitations due to directional beam gain.}  
    In our Doppler and delay estimation period, data is obtained after analog subarrays with directional beams that amplify signals from specific directions. 
    If targets fall outside the coverage of these beams, the gain drops significantly, resulting in low SNR and inaccurate estimations for Doppler and delay.
    While the communication beam provides sufficient beam gain for communication, it lacks comprehensive sensing coverage. 
    Therefore, scanning beams covering multiple directions are essential for expanding the sensing range and maintaining a high SNR for accurate Doppler and delay estimation during this stage.

    \item \textbf{Sensing through data frames.} 
    Before communication, channel estimation typically uses pilot sequences, such as the demodulation reference signal (DMRS), to obtain CSI for signal equalization.
    While useful for sensing, these sequences are insufficient in volume to gather the necessary data for robust and high-resolution Doppler and delay estimation.
    Moreover, allocating significant resources to transmit these sequences solely for sensing reduces communication efficiency.
    Therefore, using data frames for sensing without disrupting the communication process is a more efficient approach, albeit challenging.
\end{itemize}

To address the aforementioned challenges, we propose a practical scheme, as detailed in the following subsections.

\subsection{Proposed BF Vector Designs for Communication Beam}
\label{subsec:proposed_combining_vector_for_communication}
In our hybrid array configuration, two subarrays operate independently during DDE\&CS. 
As described in Section \ref{sec:system_model}, subarray $\mathcal{A}$ generates a communication beam, denoted as $\mathbf{w}_c$, and subarray $\mathcal{B}$ generates a scanning beam.  
The BF vector for the scanning beam is also obtained by referring to the codebook sequentially, as mentioned in Section \ref{subsec:initial_beam_scanning_strategy}, i.e., $\mathbf{w}_{\mathcal{B}} \in \mathcal{W}$. 

The BF vector of the communication beam $\mathbf{w}_c$ is designed to achieve two primary objectives:
\begin{itemize}
    \item \textbf{High beam gain for communication.} 
    The beam should achieve high gain towards the LoS path to ensure robust communication performance. 
    This involves optimizing the BF vector for different objectives, e.g., maximizing the signal-to-interference-plus-noise ratio (SINR).
    \item \textbf{Reference signal for CFO and TO elimination.} 
    The beam also serves as a reference for eliminating CFO and TO during DDE\&CS. 
    The objective is to generate a signal that contains only CFO, TO, and a predictable static phase offset.
    This is achieved by designing a BF vector that effectively suppresses all other paths, thereby retaining only a single static path. 
\end{itemize}

For notational simplicity, we focus on the $k$-th subcarrier and omit the combined CFO and TO term $\eta$ in this subsection, as these factors do not impact the results of our proposed BF designs.
\added{In this BF design, we adopt energy constraints to simplify the problem formulation, while the optimization algorithm design is beyond the scope of this work.}
\added{This constraint remains practically relevance given modern hybrid architectures (e.g., active arrays with phase shifters and variable gain amplifiers (VGAs)) that enable amplitude control in analog/hybrid BF.}
\added{Various advanced BF techniques (such as Bartlett, minimum variance distortionless response (MVDR), and other BF approaches) can be leveraged to generate reference signals for mitigating CFO and TO.}
\added{Here, we propose several exemplified BF schemes with diversity tradeoffs among performance, complexity, and applications, including a conventional Bartlett beamformer that maximizes the desired signal power without NLoS compression, a null-space approach that suppresses NLoS paths, and two intermediate approaches that balance gain and interference suppression.}

\subsubsection{Bartlett beamformer}
Based on the identified AoA of the LoS path during the AES, the steering vector for this AoA is directly used to form the BF vector: 
\begin{equation}
    \hat{\mathbf{w}}_{\text{BB}} = \mathbf{a}(\tilde{\theta}_0).
\end{equation}
This method, known as the \textit{Bartlett beamformer}, is noted for its simplicity and effectiveness, making it suitable for communication. 

\subsubsection{Null-Space (NS) approach}
Bartlett beamformer's limited ability to suppress NLoS paths can lead to interference, which sometimes results in estimation failures when used as a reference. 
To ensure a reference free from NLoS path interference, we adopt a method based on the principle of NS projection, leveraging spatial filtering to eliminate unwanted signals. 
The estimated steering matrix for the NLoS paths $\tilde{\mathbf{A}}_{\text{N}} \in \mathbb{C}^{N_r\times (L-1)}$, is constructed as follows,
\begin{equation}
    \tilde{\mathbf{A}}_{\text{N}} = \left[\mathbf{a}(\tilde{\theta}_{1}), \cdots, \mathbf{a}(\tilde{\theta}_{L-1})\right].
\end{equation}
The desired beam is formed by maximizing the received signal power within the NS of the NLoS paths. 
The optimization problem can be formulated as 
\begin{equation}
    \begin{aligned}
        \max_{\mathbf{w}} \quad & \left|\mathbf{w}^H\tilde{\mathbf{r}}_k\right|^2 \\
        \text { s.t. } \quad & \mathbf{w}^H \tilde{\mathbf{A}}_{\text{N}} = \mathbf{0}, \quad \left\|\mathbf{w}\right\|_2^2 = 1,
    \end{aligned}
    \label{equ:opt_NS_1}
\end{equation}
Let $\boldsymbol{\Psi}$ be the NS matrix of $\tilde{\mathbf{A}}_{\text{N}}^H$, i.e., $\boldsymbol{\Psi}^H\tilde{\mathbf{A}}_{\text{N}} = \mathbf{0}$. 
Letting $\boldsymbol{\alpha}$ be the weight vector, the original optimization problem can be reformulated as follows,
\begin{equation}
    \begin{aligned}
        \max_{\boldsymbol{\alpha}} \quad & \left|\boldsymbol{\alpha}^H\boldsymbol{\Psi}^H\tilde{\mathbf{r}}_k \right|^2 \\
        \text { s.t. } \quad & \left\|\boldsymbol{\Psi}\boldsymbol{\alpha}\right\|_2^2 = 1.
    \end{aligned}
\end{equation}

This is a Rayleigh quotient problem, which is solved using the method of Lagrange multipliers. 
The optimal $\boldsymbol{\alpha}$ is the eigenvector corresponding to the largest eigenvalue of $\boldsymbol{\Psi}^H \tilde{\mathbf{r}}_k \tilde{\mathbf{r}}_k^H \boldsymbol{\Psi}$. 
The BF vector is then derived as $\hat{\mathbf{w}}_{\text{NS}} = \mathbf{\Psi}\boldsymbol{\alpha}$.

\subsubsection{Hybrid approach}
The NS approach effectively suppresses NLoS paths but sometimes results in low gain for the LoS path, especially when multiple paths have close AoAs, leading to communication challenges. 
Inspired by multibeam design \cite{zhang2018multibeam}, we propose a hybrid strategy combining these two methods. 
The BF vector is constructed using an energy distribution factor $\rho$, $0\leq \rho \leq 1$ and a phase shift $\varphi$ as follows,
\begin{equation}
    \hat{\mathbf{w}}_{\text{HB}} = \sqrt{\rho}\hat{\mathbf{w}}_{\text{BB}} + e^{-j\varphi}\sqrt{1-\rho}\hat{\mathbf{w}}_{\text{NS}},
\end{equation}
where $\hat{\mathbf{w}}_{\text{BB}}$ and $\hat{\mathbf{w}}_{\text{NS}}$ are derived from the Bartlett beamformer and NS approaches, respectively. 
The phase shift $\varphi$ ensures coherent combination towards the LoS path by setting $\varphi$ such that $\angle \hat{\mathbf{w}}^H_{\text{BB}}\mathbf{a}(\tilde{\theta}_0) = \angle e^{j\varphi}\hat{\mathbf{w}}^H_{\text{NS}}\mathbf{a}(\tilde{\theta}_0) $.
The power of $\hat{\mathbf{w}}_{\text{HB}}$ is further normalized to 1. 
The energy distribution factor $\rho$ balances communication and sensing performance, which will be detailed in Section \ref{subsec:performance_beam}.

\subsubsection{SINR optimization approach}
Although the hybrid approach is straightforward, it still faces challenges.
A high $\rho$ can result in NLoS leakage, whereas a low $\rho$ may lead to insufficient LoS gain, making it difficult to select an optimal $\rho$ for balancing both sensing and communication in practice.

To address this, we propose a BF vector design that optimizes the SINR based on the estimated AoAs, achieving a robust balance between communication and sensing performance while eliminating the need for selecting $\rho$ in specific scenarios. 

Taking the NLoS paths as interference and ignoring the terms that do not impact power, the SINR of the received signal using $\mathbf{w}$ is
\begin{equation}
    \begin{aligned}
        \text{SINR} &= \frac{
          |\mathbf{w}^H\beta_0\mathbf{a}(\theta_0)|^2
        }{
          |\mathbf{w}^H\mathbf{A}_\text{N}\boldsymbol{\beta}_\text{N}+\mathbf{w}^H\mathbf{n}|^2
        }\\
        &= \frac{
          \beta_0^2\mathbf{w}^H\mathbf{a}(\theta_0)\mathbf{a}^H(\theta_0)\mathbf{w}
        }{
        \mathbf{w}^H \left(\mathbf{A}_\text{N} \boldsymbol{\beta}_\text{N} \boldsymbol{\beta}_\text{N}^H \mathbf{A}_\text{N}^H+ \mathbf{R}_\mathbf{n} \right)\mathbf{w}
        },
    \end{aligned}
\end{equation}
where $\boldsymbol{\beta}_\text{N} = [\beta_1, \cdots, \beta_{L-1}]^T$ represents the complex gains of the NLoS paths. 
During the AES, $\mathbf{a}(\theta_0)$ and $\mathbf{A}_{\text{N}}$ are estimated as $\tilde{\mathbf{A}} = [\mathbf{a}(\tilde{\theta}_0), \tilde{\mathbf{A}}_{\text{N}}]$. 
The complex gain vector $\tilde{\boldsymbol{\beta}} = [\tilde{\beta}_0; \tilde{\boldsymbol{\beta}}_{\text{N}}]$ can be obtained by $\tilde{\mathbf{r}}_k$ using the Least Squares (LS) estimator:
\begin{equation}
    \min_{\boldsymbol{\beta}}\|\tilde{\mathbf{r}}_k-\tilde{\mathbf{A}} \boldsymbol{\beta}\|_2^2,
\end{equation}
where $\tilde{\mathbf{r}}_k$ is derived from (\ref{equ:received_signal_after_DFT}). 
The optimal solution is $\tilde{\boldsymbol{\beta}} = \tilde{\mathbf{A}}^\dagger \tilde{\mathbf{r}}_k$. 
Additionally, the noise power $\sigma^2$ can be estimated by averaging the $N_r-L$ smallest eigenvalues of $\tilde{\mathbf{r}}_k \tilde{\mathbf{r}}_k^H$, thereby forming the noise covariance matrix $\mathbf{R}_{\mathbf{n}} = \sigma^2\mathbf{I}$. 

Subject to the power constraint, the SINR optimization problem is
\begin{equation}
    \begin{aligned}
        \max_{\mathbf{w}} \quad& \frac{
        \mathbf{w}^H\mathbf{R}_{0}\mathbf{w}
        }{
        \mathbf{w}^H\mathbf{R}_\text{N}\mathbf{w}
        }\\
        \text{s.t.} \quad & \|\mathbf{w}\|_2^2=1,
    \end{aligned}
\end{equation}
where $\mathbf{R}_{0} = b_0^2\mathbf{a}(\theta_0)\mathbf{a}^H(\theta_0)$ and $\mathbf{R}_\text{N} = \mathbf{A}_\text{N} \boldsymbol{\beta}_\text{N} \boldsymbol{\beta}_\text{N}^H \mathbf{A}_\text{N}^H+ \mathbf{R}_\mathbf{n}$. 
This is a generalized Rayleigh quotient problem, with the closed-form solution $\mathbf{w}_{\text{SINR}}$ being the left singular vector corresponding to the largest singular value of $\mathbf{R}_\text{N}^{-1}\mathbf{R}_0$. 

Based on the proposed beam designs, the communication beam's BF vector, denoted as $\mathbf{w}_c$, can be selected from $\mathbf{w}_{\text{BB}}$, $\mathbf{w}_{\text{NS}}$, $\mathbf{w}_{\text{HB}}$ or $\mathbf{w}_{\text{SINR}}$ according to specific requirements.

\subsection{Elimination of CFO and TO}
\label{subsec:elimination_of_TO_and_CFO}
Let $m$, $b$, and $g$ represent the indices of OFDM symbol within each C-frame, the frame within each snapshot, and the snapshot, respectively. 
Using the BF matrix $\mathbf{W}_b = \left[\mathbf{w}_{c}, \mathbf{w}_b\right]$ and neglecting noise, 
the received signal vector of the $b$-th frame at the $m$-th OFDM symbol is represented as
\begin{equation}
    \begin{aligned}
        \mathbf{y}_k[b, m] &= \eta_k[b,m]\mathbf{W}_b^H \mathbf{A}\tilde{\mathbf{h}}_k[b,m]x_k[b,m]\\
        &= \eta_k[b,m]\left[\begin{array}{c}
            y_{c, k}[b, m]\\
            y_{s, k}[b, m]
        \end{array}\right],
    \end{aligned}
\end{equation}
where $y_{c, k}[b, m]$ and $y_{s, k}[b, m]$ stands for the output of subarray $\mathcal{A}$ and $\mathcal{B}$, respectively. 

Referring to the discussion in Section \ref{subsec:proposed_combining_vector_for_communication}, \added{both static and dynamic NLoS paths can be neglected under the communication beam designs, we have}
\begin{equation}
    \begin{aligned}
        y_{c, k}[b, m] &= \eta_k[b,m]\mathbf{w}_c^H\mathbf{A}\tilde{\mathbf{h}}_k[b,m]x_k[b,m]\\
            &\approx \eta_k[b,m] \beta_0 e^{-j 2 \pi \tau_{0} (f_0+k\Delta f)}\mathbf{w}_c^H \mathbf{a}(\theta_0) x_k[b,m]\\
            &= \eta_k[b,m] \alpha_0^{-1} e^{-j 2 \pi \tau_{0} k\Delta f} x_k[b,m],
    \end{aligned}
    \label{equ:commun_signal_approx}
\end{equation}
where $\alpha_0^{-1} = \beta_0e^{-j 2 \pi \tau_{0} f_0}\mathbf{w}_c^H\mathbf{a}(\theta_0)$ is a constant value.

\added{
    Since the duration of each frame is much shorter than a CPI, we assume that Doppler remains constant within each frame, i.e., $e^{j2\pi f_{D, \ell}m_1 T_s} \approx e^{j2\pi f_{D, \ell}m_2 T_s}, \forall m_1\neq m_2, m_1, m_2 = 0, \cdots, M_c-1$.}
\added{Under this assumption, $y_{s, k}[b, m]$ can be expressed} in a compact form as
\begin{equation}
    \begin{aligned}
        y_{s, k}[b, m] 
        &= \eta_k[b,m]\mathbf{w}_b^H\mathbf{A}\boldsymbol{\Phi}_\tau^k \boldsymbol{\Phi}_D^{b} \boldsymbol{\beta}_{f_0}x_k[b, m],
    \end{aligned}
\end{equation}
where
\begin{align}
    \boldsymbol{\Phi}_\tau &= \operatorname{diag}\left(\left[e^{-j2\pi \tau_0 \Delta f}, \cdots, e^{-j2\pi \tau_{L-1} \Delta f}\right]\right),\\
    \boldsymbol{\Phi}_D &= \operatorname{diag}\left(\left[e^{j2\pi f_{D, 0} T_f}, \cdots, e^{j2\pi f_{D, L-1} T_f}\right]\right),\\
    \boldsymbol{\beta}_{f_0} &= \left[
    \beta_0e^{-j 2 \pi \tau_{0} f_0},
    \cdots,
    \beta_{L-1}e^{-j 2 \pi \tau_{L-1} f_0}
    \right]^T,
\end{align}
and $f_{D, \ell} = 0, \forall \ell = 0, \cdots, L_s-1$.

To mitigate CFO and TO, we calculate the signal ratio of $y_{s,k}[b,m]$ to $y_{c,k}[b,m]$ as follows,
\begin{equation}
    \begin{aligned}
        \zeta_k[b,m] &= \frac{y_{s,k}[b,m]}{y_{c,k}[b,m]}\\
        &= \alpha_0 \mathbf{w}_b^H\mathbf{A}\tilde{\boldsymbol{\Phi}}_\tau^k \boldsymbol{\Phi}_D^b \boldsymbol{\beta}_{f_0},
    \end{aligned}
    \label{equ:ratio_of_received_signal}
\end{equation}
where $\tilde{\boldsymbol{\Phi}}_\tau = \operatorname{diag}\left(\left[e^{-j2\pi (\tau_0-\tau_0) \Delta f}, \cdots, e^{-j2\pi (\tau_{L-1}-\tau_0) \Delta f}\right]\right)$ represents the relative delay matrix with respect to the LoS path. 
By observing $\zeta_k[b,m]$, we identify that it contains AoA, Doppler, and delay information, with the symbols removed. 
This characteristic allows for estimating these parameters independent of the transmitted symbols, thereby enabling the use of data payload frames for sensing while ensuring compatibility with existing communication protocols.

\subsection{Proposed AoA-Based Joint Doppler-Delay Estimation}
In this section, we present an algorithm for estimating Doppler and delay in a hybrid array system that leverages the estimated AoAs and uses a 2D-FFT to enhance the computational efficiency of MUSIC spectrum analysis. 

From (\ref{equ:ratio_of_received_signal}), we observe that Dopplers lead to phase variations across frames, while relative delays lead to phase variations across subcarriers. 
Based on these properties, we can stack a matrix $\boldsymbol{\Xi}_{\tilde{k}}[m]\in\mathbb{C}^{ \tilde{K}\times N_r}$ as follows,
\begin{equation}
    \boldsymbol{\Xi}_{\tilde{k}}[m] = \left[
        \begin{array}{ccc}
            \zeta_{\tilde{k}}[0, m] & \cdots & \zeta_{\tilde{k}}[N_r-1, m]\\
            \vdots & & \vdots\\
            \zeta_{\tilde{k} + \tilde{K}-1}[0, m] & \cdots & \zeta_{\tilde{k} + \tilde{K}-1}[N_r-1, m]
        \end{array}
    \right],
    \label{equ:stacked_measurement_matrix}
\end{equation}
where $\tilde{k} = 0, \cdots, \tilde{K}$ and $\tilde{K} = K/2$. 

Denote $\boldsymbol{\xi}_{\tilde{k}}[m]\in\mathbb{C}^{N_r\tilde{K} \times 1}$ as the vectorized $\boldsymbol{\Xi}_{\tilde{k}}[m]$, i.e., 
\begin{equation}
    \begin{aligned}
        \boldsymbol{\xi}_{\tilde{k}}[m] &= \operatorname{vec}\{\boldsymbol{\Xi}_{\tilde{k}}[m]\}\\
    \end{aligned},
    \label{equ:vectorized_measurement_matrix}
\end{equation}
which is a snapshot vector for estimation. 
The mean of $\boldsymbol{\xi}_{\tilde{k}}[m]$ within each frame is calculated as
\begin{equation}
    \overline{\boldsymbol{\xi}}_{\tilde{k}} = \frac{1}{M_c}\sum_{m=0}^{M_c-1}\boldsymbol{\xi}_{\tilde{k}}[m],
    \label{equ:mean_vectorized_measurement_matrix}
\end{equation}
which we term \textit{frame averaging}, can effectively reduce noise by averaging the CSI within a CPI, regardless of the communication symbols.

The covariance matrix of $\overline{\boldsymbol{\xi}}_{\tilde{k}}$ is given by
\begin{equation}
    \mathbf{R}_{\overline{\boldsymbol{\xi}}_{\tilde{k}}} = \mathbb{E}\left\{ \overline{\boldsymbol{\xi}}_{\tilde{k}} \overline{\boldsymbol{\xi}}_{\tilde{k}}^H \right\},
    \label{equ:covariance_matrix_doppler_delay_before_smoothing}
\end{equation}
where $\mathbf{R}_{\overline{\boldsymbol{\xi}}_{\tilde{k}}}\in\mathbb{C}^{N_r\tilde{K} \times N_r\tilde{K}}$.

To address the rank reduction of $\mathbf{R}_{\overline{\boldsymbol{\xi}}_{\tilde{k}}}$, we apply frequency smoothing, as outlined in Section \ref{subsec:frequency_smoothing_for_coherent_signals}. 
The frequency smoothed covariance matrix is
\begin{equation}
    \begin{aligned}
        \overline{\mathbf{R}}_{\overline{\boldsymbol{\xi}}} &= \frac{1}{\tilde{K}}\sum_{\tilde{k}=0}^{\tilde{K}-1}\mathbf{R}_{\overline{\boldsymbol{\xi}}_{\tilde{k}}}.
    \end{aligned}
    \label{equ:covariance_matrix_doppler_delay_smoothed}
\end{equation}
Following the approach in Section \ref{subsec:frequency_smoothing_for_coherent_signals}, we extract the signal subspace $\mathbf{U}_{\text{D},s}$ and the noise subspace $\mathbf{U}_{\text{D},n}$ from the eigenvector matrix $\mathbf{U}_\text{D} = \left[\mathbf{U}_{\text{D},s}\,\, \mathbf{U}_{\text{D},n}\right]$ of $\overline{\mathbf{R}}_{\overline{\boldsymbol{\xi}}}$, where $\mathbf{U}_{\text{D},s}$ corresponds to the $L$ largest eigenvalues.

The basis vectors $\mathbf{b}_{1}(\theta, f)\in \mathbb{C}^{N_r\times 1}$ for AoA and Doppler estimation are defined as
\begin{equation}
    \mathbf{b}_{1}(\theta, f) = 
    \left(\mathbf{W}_s^H \mathbf{a}(\theta)\right)
    \odot
    \mathbf{b}_0(f)
    ,
    \label{equ:basis_vector_Dopplers_AoAs}
\end{equation}
where $\mathbf{W}_s$ is defined as (\ref{equ:BF_matrix_all_beams}) and 
\begin{equation}
    \mathbf{b}_0(f) = \left[
        1,
        e^{j2\pi fT_f},
        \cdots,
        e^{j2\pi (N_r-1)fT_f}
    \right]^T,
\end{equation}
respectively, and for delay estimation, the basis vectors $\mathbf{b}_{2}(\tau)\in\mathbb{C}^{\tilde{K}\times 1}$ are defined as
\begin{equation}
    \mathbf{b}_{2}(\tau) = \left[1, e^{-j2\pi \tau \Delta f}, \cdots, e^{-j2\pi \tau (\tilde{K}-1)\Delta f}\right]^T.
    \label{equ:basis_vector_tau}
\end{equation}

From the form of matrix $\overline{\mathbf{R}}_{\boldsymbol{\xi}}[m]$, the joint basis vector $\mathbf{b}_{\text{D}}(\theta, f, \tau)\in\mathbb{C}^{N_r\tilde{K}\times 1}$ is given by
\begin{equation}
    \mathbf{b}_{\text{D}}(\theta, f, \tau) = \mathbf{b}_{1}(\theta, f) \otimes \mathbf{b}_{2}(\tau).
    \label{equ:basis_vector_combined}
\end{equation}

Using the noise subspace, the 3D-MUSIC spectrum is
\begin{equation}
    P_{\text{D}}(\theta, f, \tau) = \frac{1}{\|\mathbf{b}^H_{\text{D}}(\theta, f, \tau)\mathbf{U}_{\text{D}, n}\|_2^2}.
    \label{equ:MUSIC_spectrum_for_DD_estimation_noise_space}
\end{equation}
A grid search over $\theta$, $f$, and $\tau$ identifies $L$ peaks, corresponding to the parameters of the respective paths. 
However, joint estimation of these three parameters in the 3D-MUSIC spectrum results in high computational complexity. 
To address this, we use pre-estimated AoAs $\tilde{\Theta}$ from Section \ref{subsec:frequency_smoothing_for_coherent_signals} as priors, Doppler and delay can be estimated simultaneously by finding the peak of a 2D surface:
\begin{equation}
    \begin{aligned}
        \tilde{f}_\ell, \tilde{\tau}_\ell = \arg\max_{f, \tau} \frac{1}{\|\mathbf{b}^H_{\text{D}}(\tilde{\theta}_\ell, f, \tau)\mathbf{U}_{\text{D}, n}\|_2^2}, \forall \ell = 1, \cdots, L-1.
    \end{aligned}
\end{equation}
This reduces the search complexity to that of 2D-MUSIC. 

Nevertheless, the combined dimensions of Doppler and delay remain significant. 
With $G_D$ and $G_\tau$ representing the number of candidate estimates for Doppler and delay, respectively, the search dimension expands to $G_DG_\tau$.
Additionally, the matrix $\mathbf{U}_{\text{D},n}$ has dimensions $N_r\tilde{K}\times\left(N_r\tilde{K}-L\right)$, resulting in substantial computational load for estimation. 
We note that the left singular matrix $\mathbf{U}_\text{D}$ is a unitary matrix of size $N_r\tilde{K}\times N_r\tilde{K}$, satisfying $\|\mathbf{b}^H\mathbf{U}_\text{D}\|_2^2 = \|\mathbf{b}^H\|_2^2$. 
Thus, (\ref{equ:MUSIC_spectrum_for_DD_estimation_noise_space}) can be reformulated as
\begin{equation}
    \begin{aligned}
        P_{\text{D}}(\theta, f, \tau) &= \frac{1}{\|\mathbf{b}^H_{\text{D}}(\theta, f, \tau)\|_2^2 - \|\mathbf{b}^H_{\text{D}}(\theta, f, \tau)\mathbf{U}_{\text{D}, s}\|_2^2}.
    \end{aligned}
\end{equation}
By normalizing the basis vector $\mathbf{b}^H_{\text{D}}(\theta, f, \tau)$, we ensure that $\|\mathbf{b}^H_{\text{D}}(\theta, f, \tau)\|_2^2 = N_r\tilde{K}$ remains constant.

The dimension of $\mathbf{U}_{\text{D}, s}$ is $N_r\tilde{K}\times L$.
Since $L\ll N_r\tilde{K}$, it is significantly smaller than the dimension of $\mathbf{U}_{\text{D}, n}$. Consequently, searching the MUSIC spectrum in the signal subspace $\mathbf{U}_{\text{D},s}$ is much more efficient than in the noise subspace.
Furthermore, computing the eigenvector matrix of the symmetric matrix $\overline{\mathbf{R}}_{\overline{\boldsymbol{\xi}}}$ for the signal subspace is more computationally efficient, especially when using an iterative method such as the Lanczos method.

To further accelerate the evaluation of the MUSIC spectrum, we adopt a 2D-FFT approach. 
First, we partition $\mathbf{U}_{\text{D},s}$ into
\begin{equation}
    \mathbf{U}_{\text{D},s} = \left[\mathbf{u}_{0}, \cdots, \mathbf{u}_{L-1}\right].
    \label{equ:Space_Partition}
\end{equation}

Then, we have
\begin{equation}
    P_{\text{D}}(\theta, f, \tau) = \frac{1}{N_r\tilde{K}-\sum_{i=0}^{L-1}\|\mathbf{b}^H_{\text{D}}(\theta, f, \tau)\mathbf{u}_{i}\|_2^2}.
    \label{equ:MUSIC_Spectrum_Signal_Subspace}
\end{equation}

We observe that 
\begin{equation}
    \begin{aligned}
        &\quad\mathbf{b}^H_{\text{D}}(\theta, f, \tau)\mathbf{u}_{i} \\
        &= \left\{\left[\mathbf{W}_s^H\mathbf{a}(\theta)\odot \mathbf{b}_0(f)\right]\otimes \mathbf{b}_2(\tau)\right\}^H\mathbf{u}_{i}\\
        &= \left\{\left[\mathbf{W}_s^H\mathbf{a}(\theta)\otimes \mathbf{1}_{\tilde{K}\times 1}\right]\odot\left[\mathbf{b}_0(f)\otimes \mathbf{b}_2(\tau)\right]\right\}^H\mathbf{u}_{i}\\
        &= \left[\mathbf{b}^H_0(f)\otimes \mathbf{b}^H_2(\tau)\right]\left\{\left[\mathbf{W}_s^H\mathbf{a}(\theta)\otimes \mathbf{1}_{\tilde{K}\times 1}\right]\odot \mathbf{u}_i\right\}\\
        &\overset{(a)}{=} \mathbf{b}^H_2(\tau)\tilde{\mathbf{U}}_{i, \theta}\mathbf{b}^*_0(f),
    \end{aligned}
    \label{equ:2D_FFT_derivation}
\end{equation}
where condition $(a)$ is satisfied by noting
\begin{equation}
    \operatorname{vec}\{\mathbf{ABC}\}=(\mathbf{C}^T\otimes \mathbf{A})\operatorname{vec}\{\mathbf{B}\}. 
\end{equation}

Furthermore, $\tilde{\mathbf{U}}_{i, \theta}$ is the matrix obtained by reshaping the vector $\left[\mathbf{W}_s^H\mathbf{a}(\theta)\otimes \mathbf{1}_{\tilde{K}\times 1}\right]\odot \mathbf{u}_i$, that is, 
\begin{equation}
    \operatorname{vec}\left\{\tilde{\mathbf{U}}_{i, \theta}\right\}=\left[\mathbf{W}_s^H\mathbf{a}(\theta)\otimes \mathbf{1}_{\tilde{K}\times 1}\right]\odot \mathbf{u}_i.
\end{equation}

In equation (\ref{equ:2D_FFT_derivation}), both $\mathbf{b}_2(\tau)$ and $\mathbf{b}_0(f)$ are columns in a DFT matrix. 
Therefore, we define candidate parameters $\tau = 0, \frac{1}{\tilde{K}\Delta f}, \cdots, \frac{\tilde{K}-1}{\tilde{K}\Delta f}$ and $f = 0, \frac{1}{N_rT_f}, \cdots, \frac{N_r-1}{N_rT_f}$.
The corresponding stacked candidate matrices form DFT matrices. 
Accordingly, the 2D-MUSIC spectrum for angle $\theta_\ell$ can be efficiently evaluated using a 2D-FFT, and we can enhance the resolution of the spectrum by applying zero-padding. 
This method is referred to as the \textit{AoA-Based 2D-FFT-MUSIC} (AB2FM) algorithm, which is summarized in Algorithm \ref{alg:FFT2-MUSIC}.

Notably, the estimated AoAs can be refined by searching within a narrow range around pre-estimated values during DDE\&CS.
This joint estimation approach enables simultaneous estimation of all parameters, eliminating the need for separate matching procedures.
Furthermore, the estimated Doppler can be effectively utilized to distinguish between static and dynamic paths, as the AoAs estimated in AES encompass both yet remain indistinguishable.

\begin{algorithm}[tb!]
    \caption{Proposed AB2FM algorithm}
    \begin{algorithmic}[1]
        \STATE \textbf{Input:} \added{$T_f$, $\Delta f$, $M_c$, $\mathbf{W}_s$, $L$, $\tilde{K}$, $G_D$, $G_\tau$, estimated AoAs vector $\tilde{\Theta}$ from Algorithm \ref{alg:AoA_LoS}};
        
        \STATE \textbf{Initialization:} Set the estimated Doppler vector $\mathbf{f}$ and delay vectors $\boldsymbol{\tau}$ as zero vectors of length $(L-1)$;
        
        \STATE Obtain $\boldsymbol{\Xi}_{\tilde{k}}[m]$ for $m=0, \cdots, M_c-1$ and $\tilde{k} = 0, \cdots, \tilde{K}$ according to (\ref{equ:stacked_measurement_matrix});
        
        \STATE Calculate $\overline{\mathbf{R}}_{\overline{\boldsymbol{\xi}}}$ using (\ref{equ:vectorized_measurement_matrix}), (\ref{equ:mean_vectorized_measurement_matrix}) and (\ref{equ:covariance_matrix_doppler_delay_before_smoothing});
        
        \STATE Compute the eigenvector matrix $\mathbf{U}_{\text{D}, s}$ corresponding to the $L$ largest eigenvalues of $\overline{\mathbf{R}}_{\overline{\boldsymbol{\xi}}}$ using iterative methods;
        
        \FOR{$\ell=1:L-1$}
        
            \STATE Initialize temporary variables $\mathbf{P}$ and $\mathbf{Q}$ as $G_D\times G_\tau$ zero matrices;
        
            \STATE Set $\theta_\ell = \tilde{\Theta}(\ell)$ and construct $\mathbf{a}(\theta_\ell)$;
        
            \FOR{$i = 0:L-1$}
        
                \STATE $\mathbf{u}_i = \mathbf{U}_{\text{D}, s}(:, i)$, according to (\ref{equ:Space_Partition});
                
                \STATE Compute $\tilde{\mathbf{u}}_{i, \theta_\ell} = \left[\mathbf{W}_s^H\mathbf{a}(\theta_\ell)\otimes \mathbf{1}_{\tilde{K}\times 1}\right]\odot \mathbf{u}_i$;
        
                \STATE Reshape $\tilde{\mathbf{u}}_{i, \theta_\ell}$ by $\tilde{\mathbf{U}}_{i, \theta_\ell} = \operatorname{reshape}(\tilde{\mathbf{u}}_{i, \theta_\ell}, \tilde{K}, N_r)$;
        
                \STATE Update $\mathbf{Q} = \mathbf{Q} + \operatorname{FFT2}(\tilde{\mathbf{U}}_{i, \theta_\ell}, G_{\tau}, G_D)$ according to (\ref{equ:2D_FFT_derivation});
                
            \ENDFOR
        
            \STATE Compute $\mathbf{P} = \frac{1}{N_r\tilde{K}-\mathbf{Q}}$ according to (\ref{equ:MUSIC_Spectrum_Signal_Subspace});
        
            \STATE Search for the maximum value of $\mathbf{P}$ and obtain the corresponding indices $g_{\tau}$ and $g_D$;
        
            \STATE Update $\boldsymbol{\tau}(\ell) = \frac{g_{\tau}}{G_\tau\Delta f}$ and $\mathbf{f}(\ell) = \frac{g_D}{G_DT_f} - \frac{1}{2T_f}$;
        
        \ENDFOR
        
        \STATE \textbf{Output:} AoA-Doppler-delay pairs $\left(\tilde{\Theta}_{\setminus \tilde{\theta}_0}, \mathbf{f}, \boldsymbol{\tau}\right)$.
    \end{algorithmic}
    \label{alg:FFT2-MUSIC}
\end{algorithm}
\subsection{Complexity Analysis}
\label{subsec:complexity_analysis}

\added{
    For Algorithm \ref{alg:AoA_LoS}, the EVD for the signal subspace, extracted using the iterative method, has a complexity of $\mathcal{O}(N_rL^2)$, as only $L$ eigenvectors need to be computed.
    The MUSIC spectrum analysis introduces an additional complexity of approximately $\mathcal{O}(LG_{\theta}\log(G_{\theta}))$.
    Consequently, the overall computational complexity of Algorithm \ref{alg:AoA_LoS} is $\mathcal{O}(N_rL^2+LG_{\theta}\log (G_{\theta}))$.
}

\added{
    Simiarly, for Algorithm \ref{alg:FFT2-MUSIC}, the EVD process has a complexity of $\mathcal{O}(N_r\tilde{K}L^2)$.
    Within the inner loop, the update process, which is based on the $\operatorname{FFT2}$ operation, is executed $L\times(L-1)$ times, with each iteration incurring a complexity of $\mathcal{O} \left(G_DG_{\tau}\log (G_DG_{\tau})\right)$.
    Additionally, the element-wise operation in (\ref{equ:MUSIC_Spectrum_Signal_Subspace}) has a negligible complexity of $\mathcal{O}(N_r\tilde{K})$.
    Thus, the overall complexity of Algorithm \ref{alg:FFT2-MUSIC} for MUSIC spectrum analysis is approximately $\mathcal{O}\left(N_r\tilde{K}L^2 + L^2G_DG_{\tau}\log(G_DG_{\tau})\right)$.}
In contrast, the traditional 2D-MUSIC algorithm requires an EVD of the noise subspace, computing nearly all the eigenvectors since $L\ll N_r\tilde{K}$, with a complexity of $\mathcal{O}(N_r^3\tilde{K}^3)$.
The 2D-MUSIC spectrum calculation further incurs a complexity of $\mathcal{O}(G_DG_{\tau}N_r^2\tilde{K}^2)$ per search, repeated $(L-1)$ times.
Thus, the overall complexity is $\mathcal{O}\left(N_r^3\tilde{K}^3 + LG_DG_rN_r^2\tilde{K}^2\right)$.
Comparatively, the proposed AB2FM algorithm shows significantly lower complexity due to $\max(N_r, K) \ll \min(G_D, G_\tau)$.

\section{Simulation Results}
\label{sec:simulation_results}
\subsection{Simulation Setup}
\label{subsec:simulation_setup}
In this section, we present numerical results to illustrate the performance of the proposed algorithm. 
By default, the number of antennas for the receiver is set to $N_r=16$, with half-wavelength spacing. 
The frequency of the first subcarrier is $f_0 = 26\text{GHz}$ within the mmWave frequency band. 
Without loss of generality, the bandwidth is set to $B=100\text{MHz}$, containing $K = 32$ subcarriers with a frequency interval of $\Delta f = 3.125\text{MHz}$. 
The OFDM symbol period is thus $T_{s} = 1/\Delta f = 0.32\text{us}$. 
Each frame in the AES and DDE\&CS consists of $M_s = N_r = 16$ BS-TSs (1 symbol for each BS-TS is assumed, i.e., $T_b = T_s$) and $M_c = 32$ OFDM symbols, respectively, with the interval between two consecutive frames fixed at $M_f = 1024$ OFDM symbols, whose corresponding time interval are $T_{sf} = M_sT_b = 5.12\text{us}$, $T_{cf} = M_cT_s = 10.24\text{us}$ and $T_f = 327.68\text{us}$. 

In each scenario, the AoAs are assumed to be uniformly distributed within the range of $[-80^{\circ}, 80^{\circ}]$, with a minimal interval of $5^{\circ}$. 
\added{The number of multipaths, $L$, is constrained to a maximum of 5 due to the inherent sparsity of the mmWave channel.}
Three types of multipath components are considered: static LoS path, static NLoS path, and dynamic NLoS path. 
\added{To differentiate them in the simulation, Doppler shifts are applied exclusively to dynamic paths.}
\added{The LoS path is assumed to exhibit a power level that exceeds those of the NLoS paths by at least 3 dB.}
\added{The delays associated with all paths are uniformly distributed within $(0, \tau_{\text{max}})$, with the LoS path having the smallest delay.}
\added{Meanwhile, the Doppler shifts of dynamic NLoS paths follow a uniform distribution over $(-f_{\text{max}}, f_{\text{max}})$, where $\tau_{\text{max}} = T_{s}$ and $f_{\text{max}} = 1/2T_f$.}
The corresponding distance and velocity are $d_{\text{max}} = 96\text{m}$ and $v_{\text{max}} = 17.6\text{m/s}$, respectively, which are also termed as maximum detectable distance and velocity.

We employ the normalized mean-squared error (NMSE), the complementary cumulative distribution function (CCDF) and the detection accuracy as key performance indicators.

The NMSE is defined as:
\begin{equation}
    \text{NMSE}_{a} = \frac{E\left\{|\tilde{a}-a|^2\right\}}{E\left\{|a|^2\right\}},
    \label{equ:NMSE}
\end{equation}
where $a$ and $\tilde{a}$ represent the true value and the corresponding estimated value, respectively.

The CCDF is defined as:
\begin{equation}
    C(x) = 1-P(X\leq x),
    \label{equ:CCDF}
\end{equation}
where $X$ represents the variable of interest and $x$ is the threshold. 
This allows for a clear comparison of performance differences at specific error thresholds. 
A lower CCDF curve indicates higher reliability as it reflects a reduced probability of estimation errors exceeding the threshold.

The detection accuracy is defined as $\eta = \frac{N_{\text{correct}}}{N}$, where $N_{\text{correct}}$ and $N$ represent the number of correct detections and the total number of detections, respectively. 
A detection is considered correct if $|\tilde{a}-a|\leq \epsilon$, where $\epsilon$ denotes the predetermined threshold.

\subsection{Performance of AoA Estimation}
\label{subsec:performance_aoa}
\begin{figure}[tb!]
    \centering
    \includegraphics[width=\linewidth]{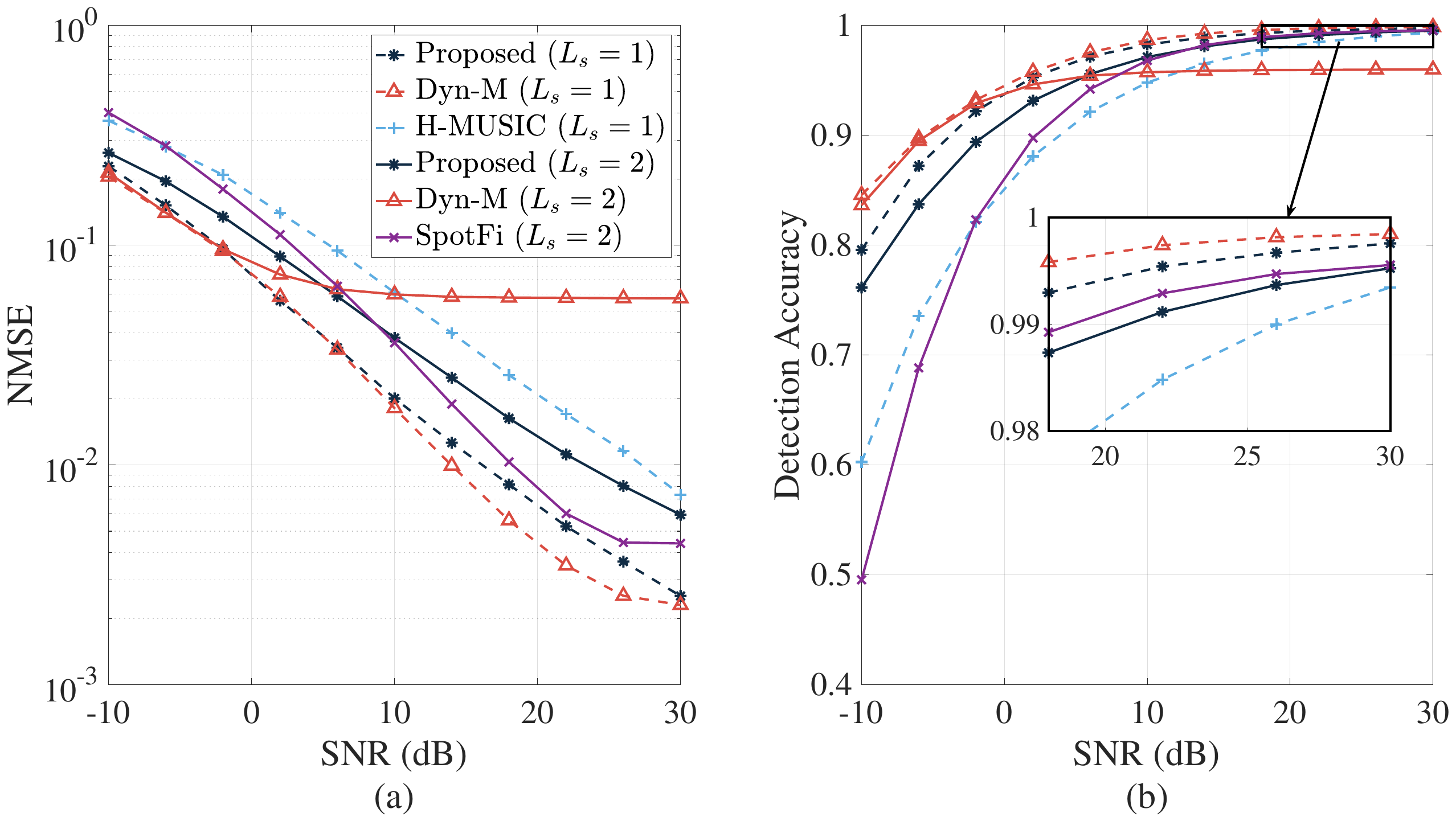}
    \caption{(a) NMSE and (b) detection accuracy of AoA estimation, with a threshold set as $\epsilon_{\angle} = 3^\circ$. Note: These figures share the same legend.}
    \label{fig:AoA_NMSE}
\end{figure}

In this subsection, we evaluate the AoA estimation performance of the proposed frequency smoothed MUSIC algorithm for hybrid arrays, compared to H-MUSIC \cite{chuang2015high}, SpotFi \cite{kotaru2015spotfi} and Dyn-MUSIC \cite{li2016dynamic-MUSIC} algorithms, where H-MUSIC is designed for hybrid arrays as well. 
These algorithms differ in their ability to detect static and dynamic paths: H-MUSIC and Dyn-MUSIC can estimate merged static paths (coherent signals) and all dynamic paths, whereas the proposed algorithm and SpotFi can estimate all static and dynamic paths separately. 

Fig. \ref{fig:AoA_NMSE} shows the NMSE (a) and detection accuracy (b) of AoA estimation for different algorithms, with a detection accuracy threshold of $\epsilon_{\angle} = 3^\circ$. 
We set $L_s = 1$ and $L_d = 3$ to evaluate the proposed algorithm, Dyn-MUSIC, and H-MUSIC in scenarios without coherent signals. 
The results indicate that the proposed algorithm and Dyn-MUSIC exhibit similar NMSE performance, even at low SNR, with both outperforming H-MUSIC. 

To evaluate the capability of handling coherent signals, $L_s$ is set to 2 and $L_d$ to 3 for the proposed algorithm, Dyn-MUSIC, and SpotFi. 
For fairness in NMSE calculation, Dyn-MUSIC considered merged static and dynamic paths, while our algorithm and SpotFi included all paths.
At low SNR, the proposed algorithm and Dyn-MUSIC outperform SpotFi in terms of both NMSE and detection accuracy. 
However, at high SNR, the NMSE for Dyn-MUSIC reaches a plateau because the estimated AoA of the merged path deviates from the LoS path, thereby limiting further NMSE reduction.
Furthermore, SpotFi shows slightly better performance than the proposed algorithm.
Notably, although both Dyn-MUSIC and SpotFi are designed for digital arrays, the proposed algorithm, tailored for hybrid arrays, demonstrates comparable performance to these digital array-based algorithms.

\subsection{Performance of Beam Designs}
\label{subsec:performance_beam}
\begin{figure}[tb!]
    \centering
    \includegraphics[width=\linewidth]{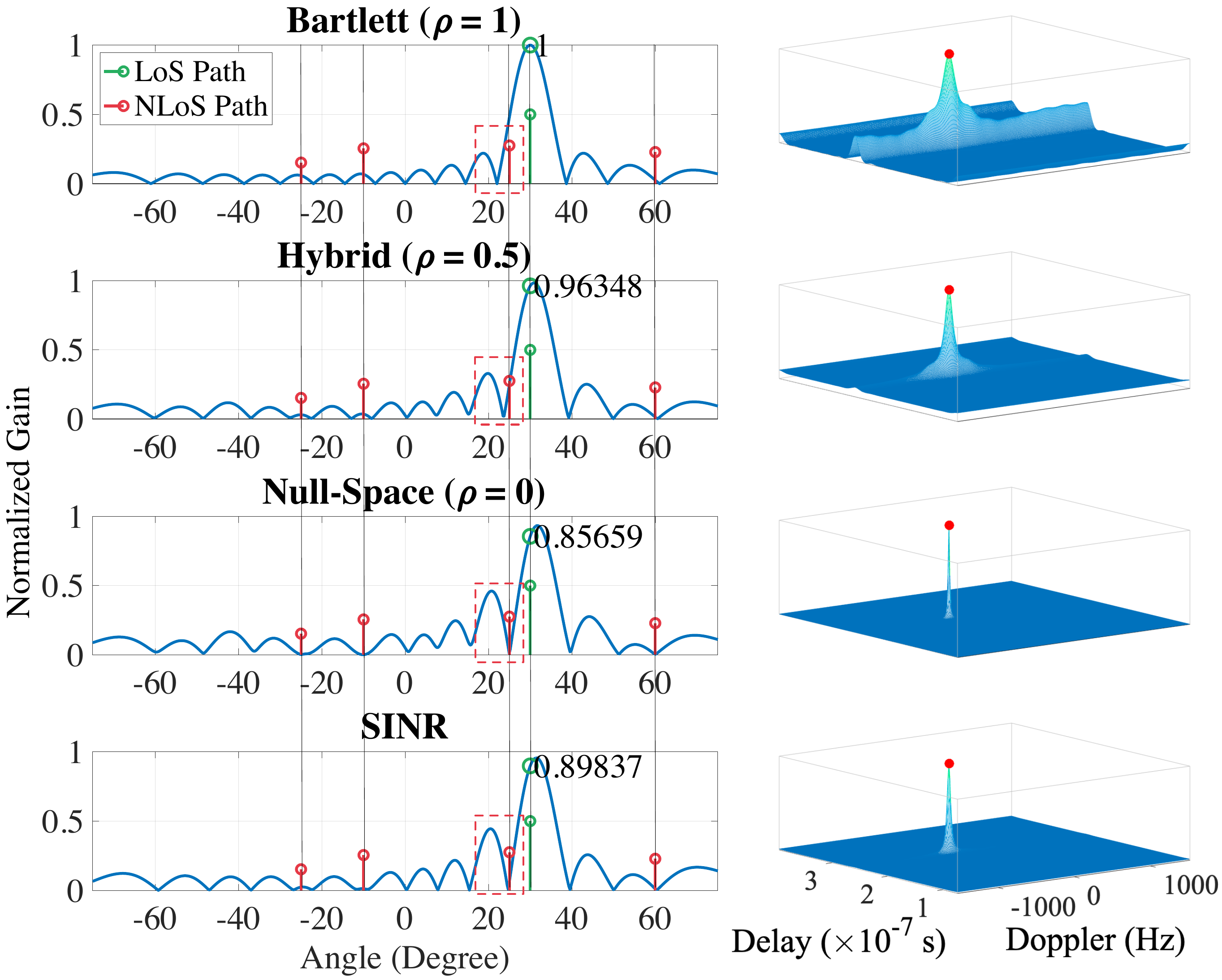}
    \caption{An example of array response (left) and Doppler-delay MUSIC spectrum (right) corresponding to various BF vector design methods under SNR = 0dB: Bartlett beam (top), hybrid beam ($\rho=0.5$, second), NS beam (third), and SINR beam (bottom). It is assumed that there are five paths with angles \{$-25^{\circ}, -10^{\circ}, 25^{\circ}, 30^{\circ}$(LoS)$, 60^{\circ}$\}, with the LoS path having the highest power and the SNR = $-$5dB.}
    \label{fig:array_response_MUSIC_Spectrum}
\end{figure}
To compare the impact of reference signals received by different communication beams, as discussed in Section \ref{subsec:proposed_combining_vector_for_communication}, we conducted experiments using these beams to estimate Doppler and delay. 

Fig. \ref{fig:array_response_MUSIC_Spectrum} depicts the array response in the spatial domain for beams designed by different approaches, along with the Doppler-delay MUSIC spectrums at an AoA of $25^\circ$ (dynamic path) using the corresponding beams. 
The Bartlett beam achieves the highest gain for the LoS path, while the NS beam achieves the lowest. 
The red-dashed rectangles highlight the NS beam's superior suppression of the NLoS paths, with the hybrid and SINR beams striking a balance between NLoS suppression and LoS gain. 
In the Doppler-axis of the spectrum, using the Bartlett beam as a reference, the peak is broadened by NLoS phase interference.  
As $\rho$ decreases, the peak sharpens, with the NS beam yielding the sharpest peak. 
Both the hybrid and SINR beams demonstrate notably sharp peaks.

\begin{figure}[tb!]
    \centering
    \includegraphics[width=\linewidth]{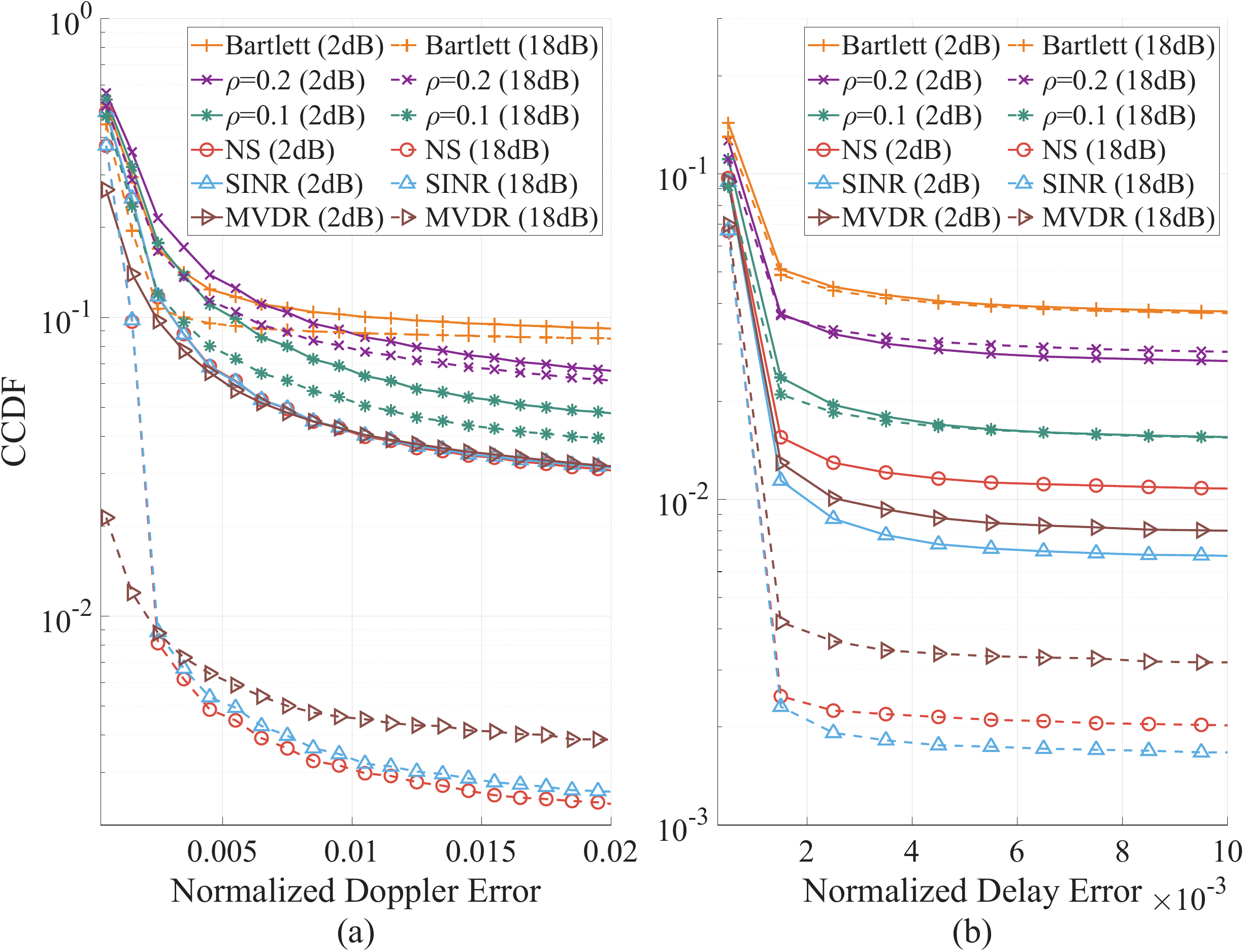}
    \caption{CCDF versus (a) normalized Doppler error and (b) normalized delay error for different proposed beams.}
    \label{fig:CCDF_CombVec}
\end{figure}
Fig. \ref{fig:CCDF_CombVec} illustrates the CCDF curves against (a) normalized Doppler error and (b) normalized delay error for different communication beams. 
The number of static and dynamic paths is set to $L_s = 2$ and $L_d=3$.
The solid and dashed lines represent simulation conditions under SNR = 2dB and SNR = 18dB, respectively. 
Different SNR levels impact the optimization results of the beams. 
Once designed, the signals from these beams are used as references to estimate Doppler and delay under the same SNR conditions. 
\added{While MVDR was not discussed earlier, we include it in the simulations as a classical BF method to further demonstrate the applicability of the proposed CFO and TO elimination scheme.}
As shown, the NS and SINR beams achieve higher accuracy in Doppler and delay estimation than other beams. 
\added{As $\rho$ increases, accuracy declines due to significant interference from NLoS paths, which outweighs the increase in LoS path gain.}
\added{The NS beam outperforms others in Doppler estimation, while the SINR beam performs better in delay estimation, suggesting that Doppler estimation is more sensitive to NLoS-induced phase interference.}
\added{Furthermore, Doppler estimation benefits from ``cleaner'' reference signals, making the NS beam the optimal choice for this purpose.}
\added{Notably, the MVDR beam achieves performance comparable to both the NS and SINR beams in Doppler and delay estimation.}

\begin{figure}[tb!]
    \centering
    \includegraphics[width=\linewidth]{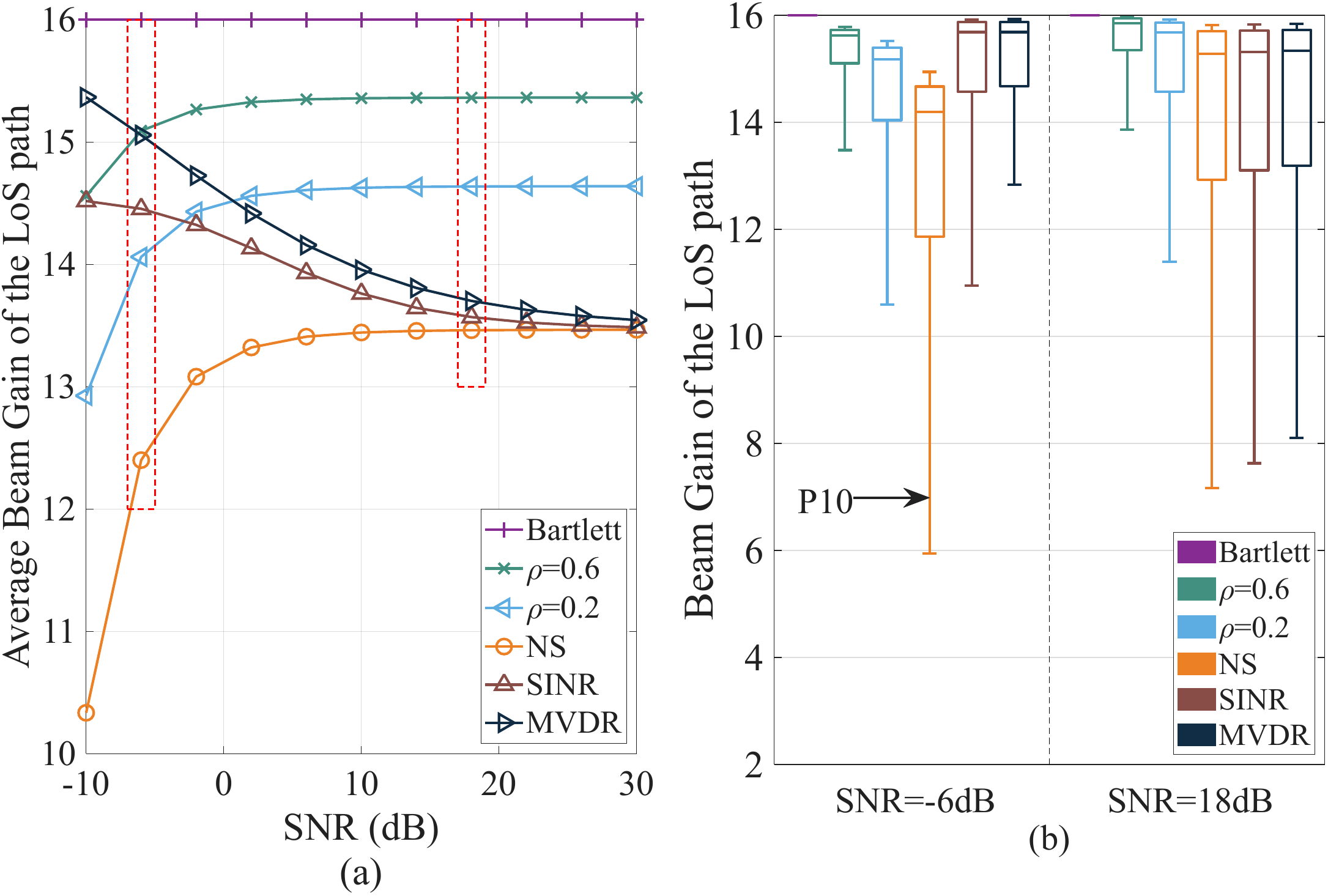}
    \caption{Comparison of (a) average beam gain of LoS path across various SNR levels and beams, and (b) beam gain distribution of LoS path for beams at SNR levels of -6dB and 18dB.}
    \label{fig:Beam_Gain_CombVec}
\end{figure}
In addition to the sensing performance, we use the beam gain of the LoS path as an indicator to briefly evaluate the communication performance of the beam, as Fig. \ref{fig:Beam_Gain_CombVec} shown. 
Fig. \ref{fig:Beam_Gain_CombVec} (a) illustrates the average beam gain toward the LoS path with respect to various SNR levels and beams. 
It is evident that the NS beam consistently has the lowest average beam gain, while the Bartlett beam has the highest, under the same SNR. 
The average beam gain increases with $\rho$. 
\added{Interestingly, the SINR and MVDR beam's average beam gain decreases as the SNR increases, which is counter-intuitive.}
This can be explained by considering that at low SNR, the noise power dominates the optimization problem, leading to an emphasis on amplifying the LoS path. 
At higher SNR, interference from NLoS paths becomes more significant than noise, leading the optimization process to prioritize the suppression of NLoS paths.
\added{As a result, the performance of the SINR and MVDR beam approaches that of the NS beam.}

Fig. \ref{fig:Beam_Gain_CombVec} (b) presents the beam gain distribution toward the LoS path across various numerical simulations. 
The whiskers extend from the edges of the box, which represent 25th- and 75th-percentiles, to the 10th- and 90th-percentiles.
Specifically, we selected two typical SNR levels, marked by the red dashed rectangles in Fig. \ref{fig:Beam_Gain_CombVec} (a): SNR = $-$6dB (left part) and 18dB (right part). 
\added{At a low SNR, the SINR and MVDR beam outperform the NS beam, with its 25th-percentile gain nearly matching the 75th-percentile gain of the NS beam.}
Conversely, at high SNR, the performance of these three beams converges, as previously discussed. 
Notably, the 10th-percentile gain of the NS beam is significantly less than half that of the Bartlett beam, indicating that in over 10\% of cases, the beam gain would severely impact communication performance.
However, as $\rho$ increases, both the median beam gain and the 10th-percentile beam gain improve significantly, and the gain disparities are greatly reduced.
Therefore, in practice, we can choose different beams based on the situation. 
If we opt for the hybrid beam, choosing an appropriate $\rho$ is crucial.

\subsection{Performance of Doppler Estimation}
\label{subsec:performance_Doppler}
\begin{figure*}[tb!]
    \centering
    \includegraphics[width=\linewidth]{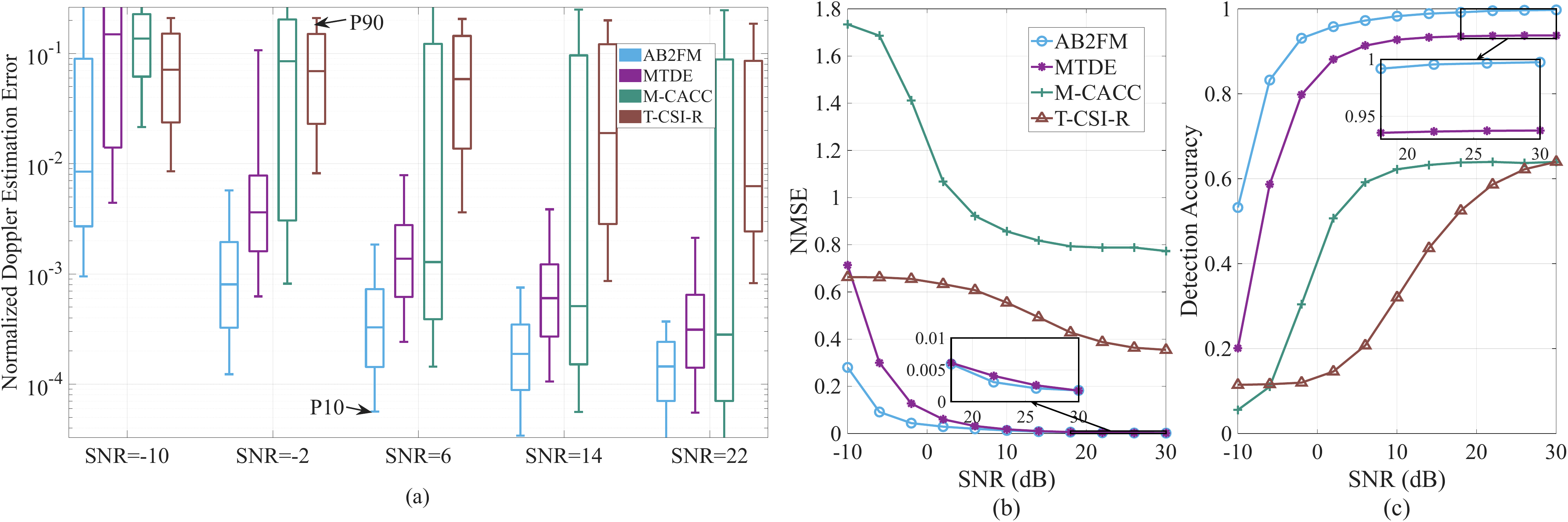}
    \caption{(a) Box plot of Doppler estimation errors normalized by the maximum frequency range, with whisker extending to the 10th- and 90th-percentiles (P10 and P90); (b) NMSE of Doppler estimation; (c) Detection accuracy of Doppler estimation, with a threshold set at 1\textperthousand{} of the maximum detectable frequency range. \added{Note: The same line types are used in (b) and (c), with the legend included in (b).}}
    \label{fig:Box_NMSE_Doppler}
\end{figure*}

To evaluate the effectiveness of the AB2FM algorithm in Doppler estimation, we conducted a comparison with three other algorithms: the MTDE estimation algorithm based on nullspace techniques \cite{zhao2023multiple}, the CACC based mirror-MUSIC algorithm (M-CACC) \cite{ni2021uplink-CACC}, and CSI-R based Taylor-series-MUSIC algorithm (T-CSI-R) \cite{ni2023uplink-CSIR}.
These algorithms are specifically designed for sensing multiple targets in asynchronous systems with digital arrays.

The Doppler estimation performance of various algorithms is compared in Fig. \ref{fig:Box_NMSE_Doppler} (a)-(c). 
\added{All algorithms are evaluated under the conditions of $L_s=2$ and $L_d=3$.} 
\added{Additionally, random CFO and TO are introduced to the signals for all these algorithms, ensuring that the results reflect realistic performance under diverse channel conditions.}

Fig. \ref{fig:Box_NMSE_Doppler} (a) shows the normalized Doppler estimation error distribution across different SNR levels for each algorithm. 
The results demonstrate that AB2FM consistently outperforms the other algorithms.
At each SNR level, AB2FM exhibits a lower median error rate, with its 90th-percentiles errors falling below the 75-percentiles errors of MTDE.
Furthermore, both AB2FM and MTDE show smaller interquartile ranges (IQRs) compared to M-CACC and T-CSI-R, indicating more consistent performance.
\added{The median error for both MTDE and M-CACC decreases rapidly as SNR increases, which suggests that these algorithms are more sensitive to noise.}
\added{Additionally, the IQR fails to converge in M-CACC due to its high Doppler and delay mismatch rate, and in T-CSI-R due to the convergence issue of the Taylor series expansion.}

Fig. \ref{fig:Box_NMSE_Doppler} (b) and (c) present the NMSE and detection accuracy results of Doppler estimation. 
The detection accuracy is computed using a threshold set at 1\textperthousand{} of $2f_{\text{max}}$, meaning an estimate is accurate if it deviates by less than $\epsilon_v = 0.0352\text{m/s}$ under this parameter setup. 
AB2FM outperforms other algorithms in NMSE, even at low SNR, which aligns with the box plot results. 
Specifically, AB2FM demonstrates robust NMSE performance across varying SNR levels. 
\added{When the SNR exceeds 0dB, AB2FM achieves over 90\% detection accuracy, and MTDE also performs well at high SNR.}
\added{In contrast, M-CACC and T-CSI-R remain limited to around 65\% accuracy even at high SNR.}

\subsection{Performance of Delay Estimation}
\label{subsec:performance_delay}
\begin{figure*}[tb!]
    \centering
    \includegraphics[width=\linewidth]{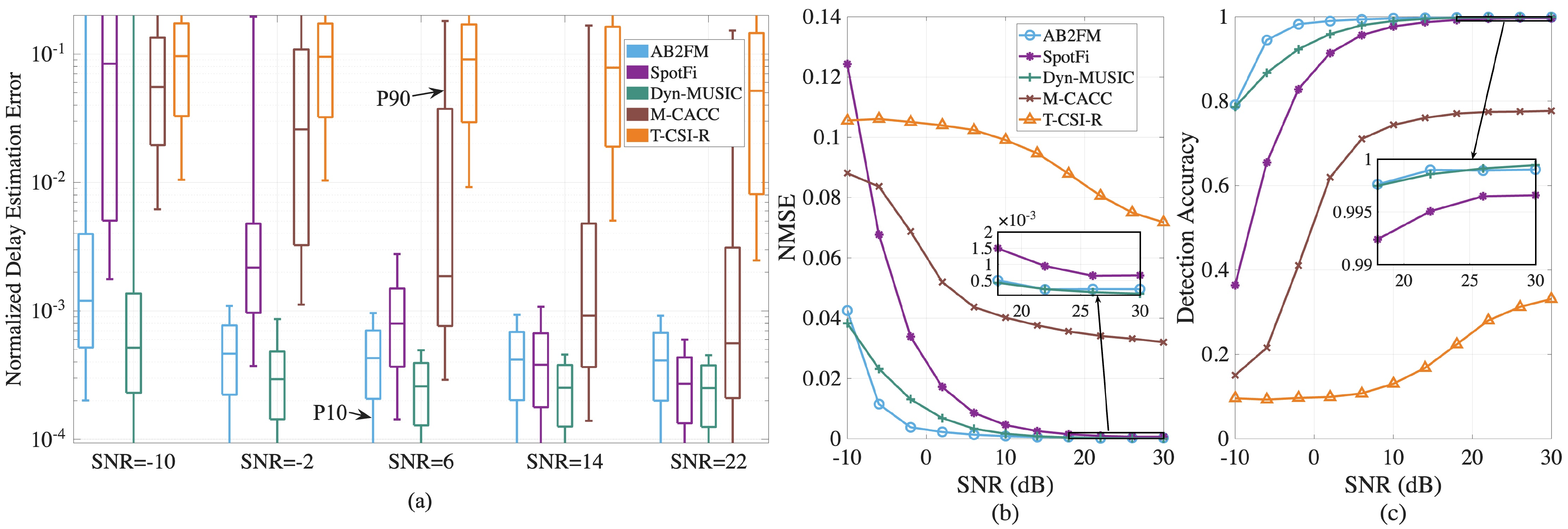}
    \caption{(a) Box plot of delay estimation errors normalized by the maximum delay range, with whisker extending to the 10th- and 90th-percentiles (P10 and P90); (b) NMSE of delay estimation; (c) Detection accuracy of delay estimation, with a threshold set at 1\textperthousand{} of the maximum detectable delay range. \added{Note: The same line types are used in (b) and (c), with the legend included in (b).}}
    \label{fig:Box_NMSE_Delay}
\end{figure*}
Fig. \ref{fig:Box_NMSE_Delay} (a)-(c) illustrate the performance of delay estimation.
\added{We compare AB2FM, SpotFi, Dyn-MUSIC, M-CACC and T-CSI-R under consistent settings as used for Doppler estimation.}
Random CFO and TO are incorporated into AB2FM, M-CACC, and T-CSI-R, whereas SpotFi and Dyn-MUSIC serve as benchmarks free from CFO and TO.
The delay referred to here is relative to the LoS path. 
Fig. \ref{fig:Box_NMSE_Delay} (a) presents the box plot of normalized delay estimation error versus SNR for the various algorithms. 
At low SNR, AB2FM and Dyn-MUSIC exhibit low estimation errors, with median values around $5\times 10^{-4}$.
However, the IQR for Dyn-MUSIC, which spans from P10 to P90, is significantly broader than that of AB2FM, indicating lower detection accuracy for Dyn-MUSIC at low SNR.
Meanwhile, SpotFi, M-CACC, and T-CSI-R exhibit poor performance, indicated by their large IQRs and high median values.
\added{As SNR increases, SpotFi's performance nearly matches Dyn-MUSIC and slightly outperforms AB2FM at high SNR.}
\added{Notably, both SpotFi and Dyn-MUSIC utilize digital array, while AB2FM, based on hybrid array, still provides comparable performance.}

Fig. \ref{fig:Box_NMSE_Delay} (b) and (c) illustrate the NMSE and detection accuracy for delay estimation. 
The detection accuracy is calculated using a threshold set at 1\textperthousand{} of $T_{s}$, corresponding to a distance deviation of $\epsilon_d = 0.096\text{m}$.
From Fig. \ref{fig:Box_NMSE_Delay} (b), it is evident that AB2FM outperforms other algorithms in both NMSE and detection accuracy under low SNR conditions.
As the SNR increases, AB2FM maintains its competitive edge in NMSE, matching the performance of SpotFi and Dyn-MUSIC. 
Although M-CACC and T-CSI-R show decreasing NMSE with increasing SNR, their overall performance lags behind that of AB2FM, SpotFi, and Dyn-MUSIC. 
Specifically, as Fig. \ref{fig:Box_NMSE_Delay} (c) shows, AB2FM achieves the highest detection accuracy at low SNR and reaches over 90\% even when the SNR is lower than 0dB. 
With increasing SNR, the detection accuracies of AB2FM, SpotFi, and Dyn-MUSIC all converge towards 1, surpassing 99\% accuracy. 

Overall, the proposed AB2FM algorithm for hybrid array systems excels in both Doppler and delay estimation, maintaining robustness in low SNR, and matches or exceeds the performance of algorithms developed for digital array systems.

\section{Conclusion}
\label{sec:conclusion}
We proposed an ISAC framework specifically designed for uplink sensing in PMNs with mmWave hybrid arrays, including beam scanning, frame structures, BF designs, and parameter estimation algorithms. 
This framework can be well integrated into existing communication systems, effectively addressing clock asynchrony while enabling parameter estimation. 
We also developed a low-complexity AB2FM algorithm to enhance the framework's practicality, ensuring efficient processing and ease of implementation. 
Simulation results demonstrate that our framework performs comparably to, or even outperforms, other high-resolution estimation algorithms developed for fully digital arrays. 
\added{Our work can be potentially extended by incorporating a constant modulus constraint in BF vector design to enhance practical applicability.
Additionally, exploring advanced waveform designs (e.g., orthogonal time frequency space (OTFS)) and extending the framework to multiple UE scenarios are promising directions for future research.}


%

\appendices
\section{Derivation of $\operatorname{rank}(\mathbf{R}_{\boldsymbol{\varsigma}_k})$ and the Combined Steering Vector for Static Paths}
\label{sec:appendix_a}
\added{
    In our model, the covariance matrix of $\boldsymbol{\varsigma}_k$ is given by (\ref{equ:covariance_matrix_AoA_k_subcarrier}), where $\tilde{\mathbf{H}}_k$ is defined in (\ref{equ:H_k_weighted_average}).
    Since our focus is on analyzing the rank of $\tilde{\mathbf{H}}_k$, and $|\eta_k|^2$ is a scalar weight that does not affect the rank, we simplify $\tilde{\mathbf{H}}_k$ as
    \begin{equation}
        \tilde{\mathbf{H}}_k = \mathbb{E}\left[\tilde{\mathbf{h}}_{k}\left(\tilde{\mathbf{h}}_{k}\right)^H\right],
        \label{equ:H_k_average} 
    \end{equation}
    where $\tilde{\mathbf{h}}_k$ (defined in (\ref{equ:CSI_discrete_k_subcarrier})) consists of $L = L_s + L_d$ multipath components.
    Since all static components originate from the same UE and share the same transmitted signal, they are coherent.
    This means there exist complex scalars $\mu_{k, \ell}$ (with $\mu_{k, 0} = 1$) such that
    \begin{equation}
        (\tilde{\mathbf{h}}_{k})_\ell=\mu_{k, \ell}(\tilde{\mathbf{h}}_{k})_{0},\forall\ell=1,\cdots, L_s-1.
        \label{equ:h_equality}
    \end{equation}
    By defining
    \begin{equation}
        \boldsymbol{\mu}_{k, s} = \left[1, \mu_{k, 1}, \cdots, \mu_{k, L_s-1}\right]^T\in \mathbb{C}^{L_s},
    \end{equation}
    the channel vector for the $k$-th subcarrier can be partitioned as
    \begin{equation}
        \tilde{\mathbf{h}}_{k} = \left[
            \begin{array}{c}
                \boldsymbol{\mu}_{k, s} (\tilde{\mathbf{h}}_{k})_0\\
                \tilde{\mathbf{h}}_{k, d}
            \end{array}
        \right],
    \end{equation}
    where $\tilde{\mathbf{h}}_{k, d}\in \mathbb{C}^{L_d}$ collects the coefficients of the dynamic paths, which are assumed to be mutually incoherent (each characterized by a unique Doppler shift as discussed in Section \ref{subsec:scenario_and_received_signal}).
    Therefore, the rank of (\ref{equ:H_k_average}) is $\operatorname{rank}(\tilde{\mathbf{H}}_k) = L_d+1$. 
    Correspondingly, neglecting the noise term, the covariance of the stacked measurements in (\ref{equ:covariance_matrix_AoA_k_subcarrier}) is 
    \begin{equation}
        \mathbf{R}_{\boldsymbol{\varsigma}_k} = \mathbf{W}_s^H\mathbf{A}
        \tilde{\mathbf{H}}_k
        \mathbf{A}^H\mathbf{W}_s.
    \end{equation}
    Since $\mathbf{A}$ (defined in (\ref{equ:steering_matrix})) of size $N_r\times L$ is assumed to have full column rank (i.e., $\theta_i \neq \theta_j, \forall i \neq j$) and $\mathbf{W}_s$ is unitary, we have
    \begin{equation}
        \operatorname{rank}(\mathbf{R}_{\boldsymbol{\varsigma}_k}) = \operatorname{rank}(\tilde{\mathbf{H}}_k) = L_d + 1.
    \end{equation}
    Moreover, the coherence relationship in (\ref{equ:h_equality}) implies that, after multiplication by $\mathbf{A}$, the static paths merge into a single combined steering vector
    \begin{equation} 
        \mathbf{a}_{k, c} = \sum_{\ell = 0}^{L_s-1}\mu_{k, \ell} \mathbf{a}(\theta_\ell),\,\,\, \text{with}\,\, \mu_{k, 0} = 1. 
        \label{equ:combined_steering_vec} 
    \end{equation} 
    In the presence of a dominant LoS path (i.e., $\mu_{k, 0} \gg \mu_{k, \ell}, \forall \ell \neq 0$), we have $\mathbf{a}_{k, c} \approx \mathbf{a}(\theta_0)$; otherwise, $\mathbf{a}_{k, c}$ will generally deviate from any individual steering vector of static path.
    Note that, since the coefficients $\mu_{k, \ell}$ depend on the subcarrier index $k$, the combined steering vector also varies with $k$.
}

\section{Proof of Proposition \ref{lem:rank_H_after_smoothing}}
\label{sec:appendix_B}
\begin{proof}
    \added{
        As analysis in Appendix \ref{sec:appendix_a}, the rank of $\overline{\mathbf{R}}_{\boldsymbol{\varsigma}}$ (\ref{equ:covariance_matrix_AoA_smoothed}) is equivalent to the rank of $\overline{\mathbf{H}}$.
        By neglecting noise and the scalar weights $|\eta_k|^2$, $\overline{\mathbf{H}}$ can be reformulated as
        \begin{equation}
            \begin{aligned}
                \mathbf{\overline{H}} &= \frac{1}{K}\sum_{k=0}^{K-1} \mathbb{E}\left[\tilde{\mathbf{h}}_{k}\left(\tilde{\mathbf{h}}_{k}\right)^H\right]\\
                &= \frac{1}{K}\sum_{k=0}^{K-1} \mathbb{E}\left[
                \boldsymbol{\Phi}_\tau^k\tilde{\mathbf{h}}_{0}\left(\tilde{\mathbf{h}}_{0}\right)^H\left(\boldsymbol{\Phi}_\tau^k\right)^H\right]\\
                &= \frac{1}{K}\sum_{k=0}^{K-1} 
                \boldsymbol{\Phi}_\tau^k\tilde{\mathbf{H}}_0\left(\boldsymbol{\Phi}_\tau^k\right)^H,
            \end{aligned}
            \label{equ:H_after_smoothing_equal}
        \end{equation}
        where $\mathbf{\Phi}_\tau^k=\operatorname{diag}\left(\left[e^{-j 2 \pi \tau_0 k\Delta f}, \cdots, e^{-j 2 \pi \tau_{L-1} k\Delta f}\right]\right)$ and $\tilde{\mathbf{H}}_0 = \mathbb{E}\left[\tilde{\mathbf{h}}_{0}\left(\tilde{\mathbf{h}}_{0}\right)^H\right]$.
    }
    
    \added{
        By taking the Hermitian square root of $\tilde{\mathbf{H}}_0$, i.e., $\tilde{\mathbf{H}}_0 = \boldsymbol{\Psi}\boldsymbol{\Psi}^H$, 
        we construct the block matrix
        \begin{equation}
            \mathbf{G} = \left[\boldsymbol{\Psi},\, \boldsymbol{\Phi}_\tau\boldsymbol{\Psi},\, \cdots,\, \boldsymbol{\Phi}_\tau^{K-1}\boldsymbol{\Psi}\right].
        \end{equation}
        Then, $\overline{\mathbf{H}} = \frac{1}{K}\mathbf{G}\mathbf{G}^H$, so that $\operatorname{rank}(\overline{\mathbf{H}}_0) = \operatorname{rank}(\mathbf{GG}^H) = \operatorname{rank}(\mathbf{G})$.
        The $i$-th row of $\mathbf{G}$ is of the form 
        \begin{equation}
            \left[\psi_{i, 1}, \cdots, \psi_{i, L}\right]\otimes 
            \left[1, e^{-j2\pi \tau_i\Delta f}, \cdots, e^{-j2\pi \tau_i (K-1) \Delta f} \right].
        \end{equation}
        Ignoring the nonzero coefficients $\psi_{i, j}$ (which do not affect linear independence), each row essentially forms a Vandermonde-like vector $\left[1, e^{-j2\pi \tau_i\Delta f}, \cdots, e^{-j2\pi \tau_i (K-1) \Delta f} \right]$.
        Since the phase term $e^{-j2\pi \tau_i\Delta f}$ is distinct for different delays $\tau_i$, these vectors are linearly independent.
        Therefore, if $K\geq L$, $\mathbf{G}$ has full row rank, i.e., $\operatorname{rank}(\overline{\mathbf{H}}) = \operatorname{rank}(\mathbf{G}) = L$.
    }
    
    \added{
        Finally, $\operatorname{rank}(\overline{\mathbf{R}}_{\boldsymbol{\varsigma}}) = \operatorname{rank}(\mathbf{W}_s^H\mathbf{A}\mathbf{\overline{H}} \mathbf{A}^H\mathbf{W}_s) = L$.
    }
\end{proof}

\ifCLASSOPTIONcaptionsoff
  \newpage
\fi

\bibliographystyle{IEEEtran} 
\bibliography{bibtex/bib/IEEEabrv, bibtex/bib/IEEEreference}
 
\end{document}